\pdfoutput = 1
\documentclass[11pt, dvipsnames]{article}
\usepackage[margin=1in]{geometry}
\usepackage{setspace}
\linespread{1.3}

\usepackage[round,sort&compress]{natbib}
\let\cite\citep
\usepackage{amsmath}
\usepackage{amssymb}
\usepackage{algorithm}
\usepackage{varwidth}
\usepackage{booktabs}
\usepackage{algpseudocode}
\allowdisplaybreaks
\usepackage{hyperref}
\makeatletter
\newcommand{\printfnsymbol}[1]{%
  \textsuperscript{\@fnsymbol{#1}}%
}
\makeatother
\usepackage{subcaption, makecell}
\usepackage{graphicx}
\usepackage{multirow}
\usepackage{graphicx}
\usepackage{wrapfig}
\usepackage{caption}
\usepackage{natbib}
\usepackage{subcaption}
\usepackage{color}
\usepackage{bbm}
\usepackage{float}

\usepackage{amsthm}
\usepackage{dsfont}
\usepackage{mathrsfs}
\usepackage{lmodern}
\usepackage{hyperref}

\usepackage{mathtools}
\usepackage{verbatim}
\usepackage{subcaption}
\usepackage{enumitem}
\usepackage{amsthm}
\usepackage{macros}

\usepackage{booktabs}
\newtheorem{theorem}{Theorem}[section]
\newtheorem{lemma}[theorem]{Lemma}
\newtheorem{proposition}[theorem]{Proposition}

\newtheorem{remark}[theorem]{Remark}

\newtheorem{definition}[theorem]{Definition}
\usepackage{xcolor}
\usepackage{makecell} 

\usepackage{booktabs} % Optional, for nicer table lines
\usepackage{array} % For better column definitions

% Keywords command
\providecommand{\keywords}[1]
{
  \small	
  \textbf{\textit{Key Words and Phrases:}} #1
}

\usepackage{appendix}
\title{Privacy Preserving Mechanisms for Coordinating Airspace Usage in Advanced Air Mobility}

\author{Chinmay Maheshwari\thanks{Chinmay Maheshwari, Maria G. Mendoza, and Victoria Tuck contributed equally to this work and are listed alphabetically.}, Maria G. Mendoza\footnotemark[1], Victoria Marie Tuck\footnotemark[1], Pan-Yang Su,\\ Victor L Qin, Sanjit A. Seshia, Hamsa Balakrishnan, Shankar Sastry} 

\date{}

\begin{document}

\maketitle

% \vspace{2em} 
% \noindent \rule{\textwidth}{0.2pt}
\renewcommand\thefootnote{}
\noindent\footnotetext{Authors’ Contact Information: Chinmay Maheshwari,
Maria G. Mendoza, Victoria Tuck, Pan-Yang Su: \{chinmay\_maheshwari, maria\_mendoza, victoria\_tuck, pan\_yang\_su\}@berkeley.edu, University of California, Berkeley; Victor L Qin, victorqi@mit.edu, MIT School of Engineering; Sanjit A. Seshia,
sseshia@eecs.berkeley.edu, University of California, Berkeley; Hamsa Balakrishnan, hamsa@mit.edu, MIT School of Engineering; Shankar Sastry, sastry@coe.berkeley.edu, University of California, Berkeley.}
\renewcommand\thefootnote{\arabic{footnote}}

\thispagestyle{empty}
\pagestyle{empty}

%%%%%%%%%%%%%%%%%%%%%%%%%%%%%%%%%%%%%%%%%%%%%%%%%%%%%%%%%%%%%%%%%%%%%%%%%%%%%%%%
\begin{abstract}
Advanced Air Mobility (AAM) operations are expected to transform air transportation while challenging current air traffic management practices. By introducing a novel market-based mechanism, we address the problem of on-demand allocation of capacity-constrained airspace to AAM vehicles with heterogeneous and private valuations. We model airspace and air infrastructure as a collection of contiguous regions (or sectors) with constraints on the number of vehicles that simultaneously enter, stay, or exit each region. Vehicles request access to airspace with trajectories spanning multiple regions at different times. We use the graph structure of our airspace model to formulate the allocation problem as a path allocation problem on a time-extended graph. To ensure that the cost information of AAM vehicles remains private, we introduce a novel mechanism that allocates each vehicle a budget of ``air-credits'' (an artificial currency) and anonymously charges prices for traversing the edges of the time-extended graph. We seek to compute a competitive equilibrium that ensures that: (i) capacity constraints are satisfied, (ii) a strictly positive resource price implies that the sector capacity is fully utilized, and (iii) the allocation is integral and optimal for each AAM vehicle given current prices, without requiring access to individual vehicle utilities. However, a competitive equilibrium with integral allocations may not always exist. We provide sufficient conditions for the existence and computation of a fractional-competitive equilibrium, where allocations can be fractional. Building on these theoretical insights, we propose a distributed, iterative, two-step algorithm that: 1) computes a fractional competitive equilibrium, and 2) derives an integral allocation from this equilibrium. We validate the effectiveness of our approach in allocating trajectories for the emerging urban air mobility service of drone delivery.
\end{abstract}

\keywords{Advanced Air Mobility, Competitive Equilibrium, Distributed Algorithm, Artificial Currency}
%%%%%%%%%%%%%%%%%%%%%%%%%%%%%%%%%%%%%%%%%%%%%%%%%%%%%%%%%%%%%%%%%%%%%%%%%%%%%%%%
\newpage
% \keywords{Advanced Air Mobility, Competitive Equilibrium, Distributed Algorithm, Artificial Currency}
\section{Introduction}
The emergence of advanced air mobility (AAM) operations, including urban air mobility (UAM) and unmanned aerial vehicles (UAVs), is expected to transform the landscape of the air transportation system. These new aerial platforms can provide air taxi services that better connect rural and suburban communities with urban centers, facilitate package and medical deliveries, and support infrastructure and public safety \cite{10472699}.

Integrating advanced air mobility (AAM) into existing air traffic management (ATM) systems presents complex and unresolved challenges. Projections of AAM traffic density and operational complexity have raised concerns about the scalability of traditional ATM infrastructure. The Federal Aviation Administration (FAA) has acknowledged these challenges \cite{faaUtmConops, faaUamConops}, stating:
\begin{quote} ``{Given the number, type, and duration of Unmanned Aircraft System (UAS) operations envisioned, the existing Air Traffic Management (ATM) System infrastructure and associated resources cannot cost-effectively scale to deliver services for UAS.}'' \hfill --- \textit{FAA UAS/UTM Con-Ops} (\cite{faaUtmConops})
\end{quote}

The limitations of conventional ATM approaches become evident when examining their design principles. Existing systems \cite{doi:10.1287/opre.46.3.406, doi:10.1287/trsc.34.3.239.12300, doi:10.1287/opre.1100.0899, 4282854, 10.1007/978-3-642-86726-2_17,BALL2018186, doi:10.1287/trsc.2020.0995} are built to manage fixed-wing aircraft operating between established airports, with flight schedules planned weeks or months in advance to minimize overall delays. In contrast, AAM introduces a fundamentally different paradigm, with a high volume of electric Vertical Take-Off and Landing (eVTOL) aircraft and UAVs operating on demand and pursuing diverse objectives. These vehicles will not only travel between fixed vertiports but also serve ad-hoc destinations, such as residential areas for package deliveries, further straining legacy ATM systems.
To address this challenge, the FAA \cite{faaUtmConops, faaUamConops} and other Air Navigation Service Providers (ANSPs) worldwide \cite{SesarUamConops} have stated that daily traffic management for AAM operations will be delegated to third-party service providers (SPs). These providers will coordinate directly with AAM vehicles to allocate airspace efficiently and safely, reducing reliance on the FAA.

Beyond operational challenges, the introduction of third-party SPs in AAM systems raises additional concerns, particularly regarding privacy. A key issue arises from the heterogeneous private valuations of AAM vehicles, which can vary significantly depending on their specific use cases. For instance, a passenger air taxi may have strict scheduling constraints, whereas a regional cargo flight might be more flexible with delays \cite{skorup2019auctioning, seuken2022market}. This variability in preferences, coupled with the sensitivity of business or personal data, makes AAM operators reluctant to disclose their private valuation information to SPs.

Against this backdrop, this work aims to answer the following question:
\begin{quote}
How can SPs allocate capacity-constrained airspace resources to dynamically arriving AAM vehicles with heterogeneous private valuations in a way that allows AAM vehicles to achieve an (approximately) optimal allocation based on their valuations, without requiring them to disclose this private information to the SPs?
\end{quote}
We address this question by focusing on a single service provider managing access to a specific region of airspace, as even under a single SP, efficiently allocating airspace to AAM vehicles without requiring their private information remains an open problem.

We model the airspace as a set of contiguous regions, each with specific capacity constraints on the number of AAM vehicles (modeled as eVTOLs) that can arrive, depart, or remain in a given region at any time (see Fig. \ref{fig:AirSegementationFigure}). Vehicles submit requests for airspace access through a menu of feasible (discrete-time) time-trajectories (or air corridors), where each time-trajectory corresponds to a sequence of tuples specifying the time step and the sector they wish to occupy at that time.
We formulate the allocation of these time-trajectories within capacity-constrained airspace as a path allocation problem on a time-extended graph (Definition \ref{def: Time-extendedGraph}), where all capacity constraints are represented as constraints on the graph's edges.

\begin{figure}
    \centering
    \includegraphics[width=0.6\linewidth]{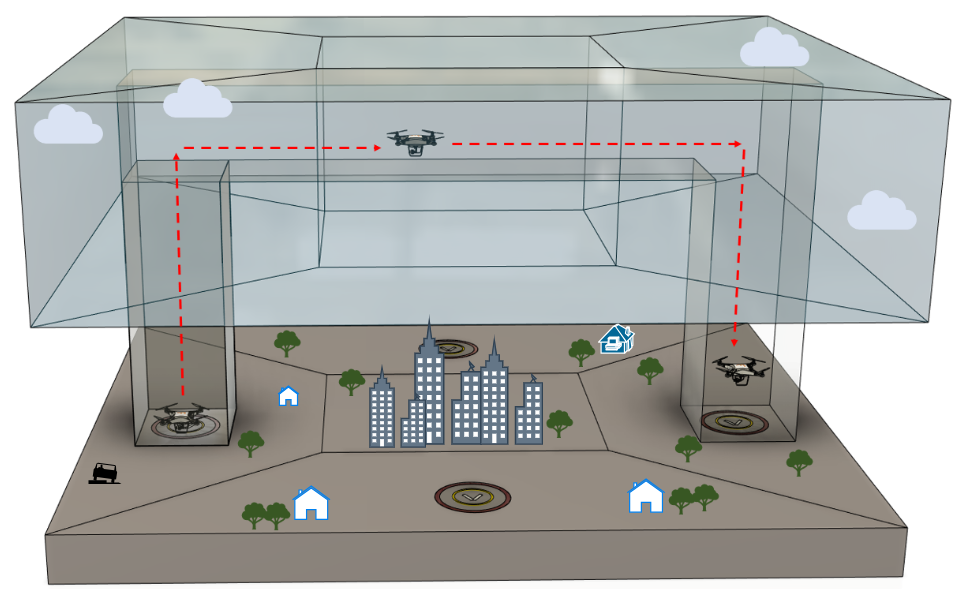}
    \caption{Model of airspace as a set of contiguous regions, each having arrival, departure, and transit constraints. Some regions are at the cruising altitude, while others encompass vertiport/launch pads.}
    \label{fig:AirSegementationFigure}
\end{figure}

To prevent the monopolization of airspace by major players, we introduce an artificial currency-based auction mechanism \cite{skorup2019auctioning}. Furthermore, to accommodate the dynamically arriving requests of AAM vehicles, we propose implementing this auction mechanism in a receding-horizon manner (see Section \ref{sec: MarketsAAM}). This approach periodically collects AAM vehicle requests and determines allocations using the proposed auction mechanism.

In each of these auction mechanisms, the SP allocates “air-credits” (the artificial currency) to each AAM vehicle requesting airspace access and charges an anonymous price (in air-credits) for using different airspace regions (i.e., resources on the time-extended graph). Based on these prices, each vehicle selects its most preferred time-trajectory on the time-extended graph, maximizing its valuation while adhering to its budget constraint.
The SP's objective is to design prices and allocate airspace efficiently and safely, guided by the following desiderata:
\((i)\) Given the price vector, the SP’s allocation should be optimal for every AAM vehicle, ensuring that each vehicle maximizes its private valuation subject to budget constraints; \((ii)\) The capacity constraints of the airspace must be respected;  \((iii)\) Prices must be nonnegative, and if strictly positive, the capacity of each airspace region must match its demand. An allocation-price tuple satisfying these conditions is known as a \textit{competitive equilibrium} in economics, which may not always exist \cite{7c65302b-f079-361a-94f1-0c3c9f6fc76b}. However, drawing inspiration from Fisher markets under linear constraints \cite{jalota2023fisher}, we establish the existence of a \textit{fractional competitive equilibrium}—a relaxation of the competitive equilibrium that permits fractional allocations (Proposition \ref{prop: MarketClearingExists}). Moreover, we demonstrate that the prices at a fractional competitive equilibrium can be computed as the optimal dual multipliers of a \textit{budget-adjusted welfare problem} (see \eqref{eq: PlannerOptMain}), which is a convex optimization problem (Lemma \ref{lem: ConvexOptMain}). Notably, the budget adjustment for each vehicle is determined by the optimal dual multiplier associated with a linear constraint of this optimization problem. Consequently, computing a fractional competitive equilibrium reduces to solving a fixed-point problem (Proposition \ref{prop: FixedPointOptimization}).

Building on these theoretical insights, we propose a \textit{two-step algorithmic procedure} for allocating AAM vehicles to airspace. In the first step, we develop a \textit{two-loop} algorithm to compute the fractional competitive equilibrium without requiring information about the vehicles' private valuations. Specifically, this step involves solving the fixed-point problem stated in Proposition \ref{prop: FixedPointOptimization} using a two-loop algorithm (see Algorithm \ref{algo:1}) that mimics fixed-point iteration.
The inner loop solves a reformulated budget-adjusted welfare problem (cf. \eqref{eq: PlannerOptADMM}) in a distributed manner using the Alternating Direction Method of Multipliers (ADMM) (see Appendix \ref{sec:ADMM} for a review of ADMM). This ensures that AAM vehicles do not need to share their valuations with SP and other AAM vehicles. The outer loop then updates the budget adjustment parameter using the latest value of the dual multiplier associated with each AAM vehicle's individual constraints.

Algorithm \ref{algo:1} can be interpreted as an online learning process, where AAM vehicles iteratively refine their trajectory choices based on anonymized market signals—such as expected demand and prices—while the SP dynamically adjusts these signals based on observed demand by AAM vehicles. This enables the SP to estimate equilibrium prices without access to the private valuations of AAM vehicles, while vehicles deconflict through indirect price-mediated coordination, obviating the need for direct trajectory or valuation sharing. This mechanism aligns with online learning of market mechanisms (cf. \cite{kleinberg2003value, li2020simple, bistritz2021online}), as both the agents and the SP update their strategies sequentially based on observed feedback.

In the second step (i.e., Algorithm \ref{algo:2}), we derive an integral allocation from the fractional competitive equilibrium obtained in the first step while keeping the prices unchanged. We rank the vehicles according to the fractional allocation they received for their most desired resource in the first step. The SP then allocates resources to the vehicles sequentially according to this ranking, updating the remaining capacity after each allocation (Algorithm \ref{algo:2}).
Importantly, in both Algorithms \ref{algo:1}-\ref{algo:2}, the SP requires only information on resource demands, feasible time trajectories, and the most desired path of each AAM vehicle, without accessing any private valuation data. Likewise, individual AAM vehicles do not obtain information about the time trajectories or private valuations of other vehicles, ensuring that privacy is preserved throughout the process.
% should we least which parameters?

To validate the effectiveness of our approach, we analyze drone-based package delivery using a dataset of drone trajectories generated with realistic physical models by Airbus over the city of Toulouse, France. A further study on the scheduling problem for electric air taxis on a hypothesized air traffic network in Northern California, United States can be found in Appendix \ref{sec:SanFranciscoCase}.

\subsection{Related Works}
The literature on market mechanisms for airspace management in Advanced Air Mobility (AAM) remains relatively limited \cite{seuken2022market}. Some recent works \cite{balakrishnan2022cost, chin2023traffic} have explored airspace management through second-price auctions combined with congestion management algorithms, as well as combinatorial auctions \cite{doi:10.2514/6.2024-4454}. However, these approaches are restricted to unit-capacity regions, limiting their applicability to real-world AAM systems that require flexible allocation across multiple capacity-constrained airspace sectors.
Su et al. \cite{su2024incentivecompatiblevertiportreservationadvanced} addressed these limitations by utilizing a generalized Vickrey–Clarke–Groves (VCG) auction for AAM resource management, incorporating considerations of social welfare, safety, and congestion. Their approach ensures proportional fairness by optimizing a social cost function based on the weighted utilities of all fleet operators. A key design goal in their paper is to elicit truthful preferences from AAM vehicles to achieve efficient airspace allocation. However, this reliance on truthful bidding raises privacy concerns, as vehicles must disclose their exact valuations through bids, potentially exposing sensitive operational information to competitors or regulators.
Balakrishnan and Chandran \cite{balakrishnan2017distributed} proposed a column generation algorithm that iteratively updates prices to determine an allocation that satisfies capacity constraints. However, their method relies on vehicles reporting their private valuations, which may raise privacy concerns. In contrast, our approach eliminates the need for AAM vehicles to disclose their private valuations, enhancing privacy while still achieving efficient allocation.

A distinguishing feature of our approach compared to existing market mechanisms for AAM is the use of artificial currency. While monetary transactions are effective in eliciting preferences, as noted by \cite{seuken2022market}, they can disproportionately advantage operators with greater financial resources. Our approach mitigates this issue by introducing a system of artificial currency that helps ensure fairness.
The idea of using artificial currency for fair and efficient resource allocation has been extensively studied in economics, beginning with \cite{VARIAN197463}. These works typically consider environments where agents are endowed with artificial currency that holds no value outside the market. However, most of these studies focus on the allocation of substitutable goods, whereas our setting requires agents to select multiple goods over a time-extended network, leading to complementarities rather than substitutability. Artificial currency mechanisms that handle complementarities have been studied in \cite{7c65302b-f079-361a-94f1-0c3c9f6fc76b, 10.1145/3580507.3597809}. The key distinction of our work is that we allow agents to save currency for future use, and the saved budget influences the agent's utility in a quasi-linear manner. This feature aligns our model more closely with combinatorial auction environments.

Combinatorial auctions enable participants to bid on bundles of items rather than individual items \cite{cramton2006introduction}. Given the exponential complexity inherent in these auctions, there is no single format that universally applies across all settings. In environments with budget constraints, iterative auction formats—such as the Simultaneous Ascending Auction, the Ascending Proxy Auction, and the Clock-Proxy Auction \cite{10.1162/154247604323068168, cramton2006combinatorial}—are particularly relevant. 
These formats allow agents to observe current prices and iteratively adjust their bids within their budget constraints before final submission. 
Additionally, they offer a privacy advantage by enabling incremental bid submission, thereby reducing exposure of full preference information. 
However, these mechanisms may suffer in efficiency in combinatorial auction settings with strong complementarities. In our approach, we propose a new way to model complementarities using linear equality constraints and leverage recent advances in Fisher markets with linear constraints to determine allocation. 
In particular, we compute a fractional-competitive equilibrium, which provides a relaxed solution that is easier to compute. From this fractional-competitive equilibrium, we then derive an integral allocation that satisfies the capacity constraints in the problem.

\subsection{Organization}

The article is organized as follows: In Section \ref{sec: ModelAAM} we introduce the model of airspace management studied in this paper. In Section \ref{sec: MarketsAAM} we present a high-level overview of our approach, which implements an artificial-currency-based auction mechanism in a receding horizon manner. Next, in Section \ref{sec: Auction_Single_Full_Information} we formally present such an auction mechanism along with theoretical results on fractional-competitive equilibrium. In Section \ref{sec: Algorithm}, we present the algorithmic procedure to compute an approximate market mechanism. In Section \ref{sec: DroneDelivery}, we validate the performance of our mechanism using a drone delivery dataset generated based on a real drone dynamics model from Airbus.
We discuss the limitations of our approach in Section \ref{sec: limitations}, and conclude this study in Section \ref{sec: Conclusion} with some interesting directions for future work. Proofs and additional supplementary material are provided in the Appendix. In Appendix \ref{sec:section21}, we present a simple example to illustrate the time-extended graph and constraints \eqref{eq: SelectingRoute}-\eqref{eq: PathConstraints}. In Appendix \ref{sec: ProofsAll}, we provide proofs for all the theoretical results discussed in this paper. A detailed explanation of the ADMM formulation can be found in Appendix \ref{sec:ADMM}. In Appendix \ref{sec:SanFranciscoCase}, we study an additional AAM scenario of vertiport reservation for air-taxi services in Northern California. Finally, in Appendix \ref{sec:appendixA}, we consolidate important notations used in this work in a table.
\section{Model of Advanced Air Mobility}\label{sec: ModelAAM}
Consider the problem of allocating airspace to Advanced Air Mobility (AAM) vehicles, such as drones and eVTOLs, which enable novel AAM services in urban environments. 
In our model, we segment the urban airspace into contiguous regions or sectors, denoted by \(\mathcal{R}\).
The spatial configuration of the airspace is represented as a graph \( \mathcal{G} = (\mathcal{R}, \mathcal{E}) \), where \( \mathcal{R} \) corresponds to the set of vertices, and \( \mathcal{E} \subseteq \mathcal{R} \times \mathcal{R} \) represents the set of edges that connect adjacent regions, indicating feasible movements for AAM vehicles between regions.
To account for the temporal dimension, we divide the entire day into \(T\) time-steps (with each time-step comprising of \(\tau\) seconds of time). AAM vehicles arrive dynamically, each requesting access to the airspace.

Each AAM vehicle has a feasible set of time-trajectories (also referred to as ``air corridors'') that it can utilize within the airspace. 
Each vehicle can independently determine the set of feasible trajectories by accounting for energy consumption, travel time, and other operational factors. 
Furthermore, the feasible set only includes time-trajectories that start before the vehicle's takeoff; mid-flight trajectory changes are not permitted.
Fig. \ref{fig: AuxGraph} illustrates a schematic representation of a time-trajectory for a package delivery scenario. 
For every feasible trajectory, the AAM vehicle generates a \emph{private valuation} that may differ across trajectories, reflecting its preferences over trajectories.

In our model, we assume that each region has a limit on the number of vehicles that can simultaneously arrive, depart, or remain in that region at any given time\footnote{This constraint arises from safety concerns that limit the number of vehicles that can be autonomously de-conflicted in a confined space \cite{BAURANOV2021100726}.}. Due to capacity constraints, it may not be feasible to allocate each AAM vehicle its \emph{most preferred} time-trajectory, as this could violate airspace constraints.
In such a scenario, a service provider (SP) is typically responsible for managing the airspace and assigning each AAM vehicle a feasible time-trajectory while ensuring compliance with airspace constraints.
To facilitate this process, we introduce the framework of a \emph{time-extended graph}, which is essential for integrating arrival, departure, and transit constraints when allocating time-trajectories to AAM vehicles.
\begin{figure*}[!ht]
        \centering
    \begin{minipage}[l]{0.9\textwidth}
        \centering
\includegraphics[width=\textwidth]{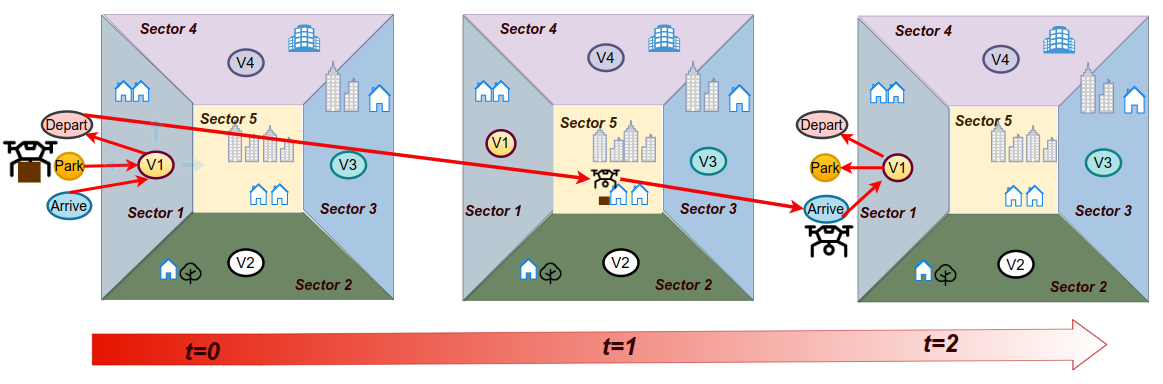}
    \end{minipage}
      \caption{A time-trajectory of a drone delivering a package in an urban setting. The drone starts from the launch pad \(\textsf{V1}\) (Sector 1) and needs to drop a package in Sector 5 before returning. Here, we have shown a simple trajectory that moves between regions in one time-step, but in general, such trajectories can remain in any region for multiple time-steps. 
      } 
      \label{fig: AuxGraph}
\end{figure*}

\begin{definition}[Time-extended Graph]\label{def: Time-extendedGraph}
We define \(\tilde{\mc{G}} = (\tilde{\mc{R}}, \tilde{\mc{E}})\) as the \emph{time-extended graph} with horizon \(T\), for some positive integer \(T\), such that 
\begin{itemize}
        \item[(i)] \(\tilde{\mc{R}} = \cup_{t=1}^{T}\cup_{r\in\mc{R}} \{\nu(r,t), \nu^\arr(r,t), \nu^\dep(r,t)\}\), where \(\nu(r,t)\), \(\nu^\arr(r,t)\) , and \(\nu^\dep(r,t)\) are three replicas of region \(r\) at time \(t\). 
    \item[(ii)] 
\(\tilde{\mc{E}}=\cup_{j=1}^{4}\tilde{\mc{E}}^{(j)}\subseteq \tilde{\mc{R}}\times\tilde{\mc{R}},\) where

\begin{itemize}
    \item \(\tilde{\mc{E}}^{(1)} := \cup_{t=1}^{T}\{(\nu^\arr(r,t),\nu(r,t))\}\). Any edge of the type \((\nu^\arr(r,t),\nu(r,t))\) has capacity\footnote{The arrival and departure constraints are primarily required for airspace regions comprising of vertiports/launch-pads.} \(C^\arr(r,t)\).  
    \item \(\tilde{\mc{E}}^{(2)} := \cup_{t=1}^{T}\{(\nu(r,t),\nu^\dep(r,t))\}\). Any edge of  the type \((\nu(r,t),\nu^\dep(r,t))\) has capacity \(C^\dep(r,t)\). 
    \item \(\tilde{\mc{E}}^{(3)} := \cup_{t=1}^{T-1}\{(\nu(r,t),\nu(r,t+1))\}\). Any edge of the type \((\nu(r,t),\nu(r,t+1))\) has capacity \(C^\park(r,t)\).
     \item \(\tilde{\mc{E}}^{(4)} := \cup_{t=1}^{T-1}\cup_{(r,r')\in \mc{E}}\{(\nu^\dep(r,t),\nu^\arr(r',t+1))\}\). Any edge of the type \((\nu^\dep(r,t),\nu^\arr(r',t+1))\) is unconstrained.  
\end{itemize} 
    \end{itemize}
\end{definition}
Any time-trajectory of an AAM vehicle is a path on this time-extended graph. A simple example describing the time-extended graph and time-trajectories is provided in Appendix \ref{sec:section21}.

In this work, we propose a market-based mechanism for the service provider (SP) to allocate capacity-constrained airspace infrastructure to dynamically arriving AAM vehicles. Our mechanism ensures that: \((i)\) capacity constraints of the airspace are strictly satisfied, \((ii)\) each AAM vehicle receives an (approximately) optimal allocation according to its preferences, and  \((iii)\) AAM vehicles are not required to disclose their private valuations to the SP. The detailed design and implementation of this mechanism are discussed in the following section.

\section{High-level Overview of Market-based Mechanism}\label{sec: MarketsAAM}

We propose an auction-based approach that allocates airspace to dynamically arriving AAM vehicles in a \emph{receding-horizon} fashion by \textit{periodically} allocating the available airspace capacity through an auction mechanism. In particular, 
over a total duration of \(T\) time-steps (with each time-step comprising of \(\tau\) seconds of time), the service provider (SP) conducts \(I\) auctions. In every auction, the SP allocates the airspace to AAM vehicles participating in that auction. 

\begin{wrapfigure}{r}{0.65\textwidth} % 'r' for right, 'l' for left
    \centering
\includegraphics[width=0.6\textwidth]{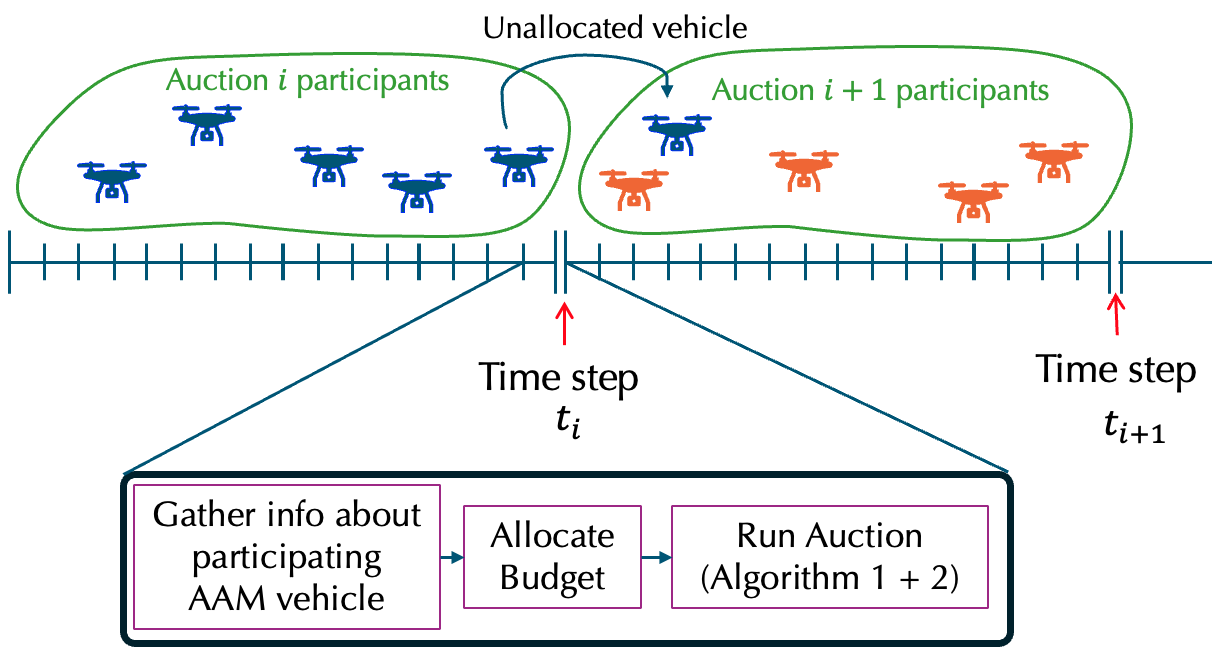}
\caption{A schematic depiction of the receding horizon approach.}\label{fig:recedingHorizonSchematic}
\end{wrapfigure}

The allocation of the \(i\)-th auction is determined at time-step 
\(
t_i = (i-1)\lfloor T/I \rfloor + 1.
\) The set of vehicles participating in the $i$-th auction comprises of new vehicles that have requested airspace access to SP after the \((i-1)\)-th auction (i.e. after time-step $t_{i-1}$) but also those that were not allocated in previous auctions. 
Each AAM vehicle has a feasible set of time-trajectories that it can follow\footnote{Note that the time-trajectory allocated in auction \(i\) may start at time-step \(t_i\) and can end at time-step \(T\).}. In each auction, the SP assigns a time-trajectory to each participating AAM vehicle from its feasible set\footnote{ Note that the feasible set of trajectories of each vehicle include an option \(\varnothing\) which indicates that the vehicle will not take-off in that auction (i.e remain unallocated), which makes sure that our problem always has a feasible allocation.}. It is important to 
emphasize that the flight trajectory of each AAM vehicle is finalized before takeoff, and they are not allowed to participate in subsequent auctions to modify their trajectory mid-flight. Finally, the SP 
updates the remaining airspace capacity before initiating the next auction. A schematic of the receding-horizon approach is provided in Fig. \ref{fig:recedingHorizonSchematic}.

At a high level, in each auction, the SP allocates a certain amount of ``air credits'' to each participating AAM vehicle. These air credits, along with other credits they have from past auctions, act as an artificial currency for purchasing airspace access during that auction.
Additionally, the SP imposes a payment (in air credits) for the use of each edge in the time-extended graph. AAM vehicle can utilize any leftover budget in future auctions to purchase access to the airspace.

The main design component of this auction mechanism is to determine these prices in a way that allows each AAM vehicle to afford an (approximately) optimal time-trajectory—one that maximizes its utility within its allocated budget—while ensuring that the overall airspace allocation adheres to capacity constraints.
Moreover, these prices must be computed in a manner that preserves the privacy of AAM vehicles, ensuring that they do not need to disclose their private valuations. 
In Section \ref{sec: Auction_Single_Full_Information}, we formally present one such auction mechanism that is used in the receding horizon approach discussed above, assuming that the SP has access to private valuations. Additionally, we study the theoretical properties of the proposed auction mechanism. 
In Section~\ref{sec: Algorithm}, we relax the assumption that the SP knows the private valuations of AAM vehicles and develop an algorithmic approach (Algorithm \ref{algo:1}-\ref{algo:2}) to implement the proposed auction mechanism without requiring this knowledge.

\section{Auction Mechanism: Design and Analysis}\label{sec: Auction_Single_Full_Information}
In this section, we formally describe the elements of our proposed auction mechanism, along with theoretical guarantees, assuming that the service provider (SP) has access to the private valuations of each AAM vehicle.  

Let \(U\) be the set of AAM vehicles requesting access to the airspace in the current auction. Each AAM vehicle \( u \in U \) is allocated a budget of air credits, denoted by \( w_u \geq 0 \)\footnote{The variation in budgets among AAM vehicles can arise due to two factors: (i) savings accumulated from previous auctions, and (ii) the priority given by SP, for instance in case of disaster relief and emergency service vehicles.}. 
The set of feasible time-trajectories for any AAM vehicle \( u \in U \) is given by \( M_u = R_u \cup \{\varnothing\} \), where \( R_u \) represents a subset of paths on the time-extended graph (cf. Definition \ref{def: Time-extendedGraph}), and \( \varnothing \) denotes the option to drop out of the system if no feasible path is available due to high congestion.

We define \( x_{u,e} \in \{0,1\} \) to indicate whether an AAM vehicle \( u \) is using edge \( e \in \tilde{\mathcal{E}} \), and \( x_{u,\varnothing} \in \{0,1\} \) to indicate whether the AAM vehicle \( u \) has dropped out of the system. 
Furthermore, each AAM vehicle has the option to convert its unused budget into an ``outside option'' for future auctions\footnote{ The budget converted into the outside option is carried forward and added to the AAM vehicle's budget in future auctions.}. 
We use \( x_{u,\outsideOpt} \) to denote the amount of outside options that the AAM vehicle consumes in any auction. For concise notation, we define \( \mathbf{x}_u = (x_{u,e})_{e\in\tilde{\mathcal{E}}} \) and \(\bar{\mathbf{x}}_u = \begin{bmatrix} \mathbf{x}_u^\top, x_{u,\outsideOpt}, x_{u,\varnothing} \end{bmatrix}^\top.\) 

The utility derived by any vehicle \(u\in U\) from selecting a route \(s\in R_u\) is denoted by \(v_{u,s}\in \mathbb{R}_+\), selecting the option \(\varnothing\) is denoted by \(v_{u,\varnothing}\in \mathbb{R}_+\), and per unit consumption of outside option is given by \( v_{u,\outsideOpt} \in \mathbb{R}_{+}\). Therefore, the overall utility derived by AAM vehicle \(u\) is given by 
\begin{align}\label{eq: UtilityUser}
f_u(\bar{\mathbf{x}}_u) 
= \sum_{s\in R_u}v_{u,s} x_{u, e^\ast(s)} + v_{u,\outsideOpt}x_{u,\outsideOpt} + v_{u,\varnothing}x_{u, \varnothing}, 
\end{align}
where \(e^\ast(s)\) denotes the \textit{departing edge}\footnote {In our framework, we use the convention that the AAM vehicles place all their valuation of a route on the first edge on the time-extended graph that goes out from the origin node. This is an edge of type \(\tilde{\mathcal{E}}^{(4)}\) in Definition \ref{def: Time-extendedGraph}. Additionally, we add a constraint (cf. \eqref{eq: Cap_ConstraintM}) that ensures that if a vehicle selects a departing edge corresponding to a route, then all edges on that route will be selected}. on the route \(s\).

The SP charges a price \(p_e,\) for any AAM vehicle using the edge \(e\in \tilde{\mathcal{E}}\), and a payment \( p_\outsideOpt \geq 0 \) for the consumption per unit of the outside option.

Upon observing the prices, each AAM vehicle solves the following optimization problem: 
\begin{subequations}\label{eq: UserOpt}
\begin{align}    \max_{\bar{\mathbf{x}}_u} &\quad f_u(\bar{\mathbf{x}}_u)  \tag{IOP}\label{eq: IOP_Objective}\\ 
    \text{s.t.} & \quad \mathbf{p}^\top \mathbf{x}_u + p_{\outsideOpt}x_{u,\outsideOpt} \leq w_u 
    \label{eq: BudgetConstraints}\\ 
    {} & \quad \tilde{\mathbf{a}}_u^\top \mathbf{x}_u  + x_{u,\varnothing} = 1 \label{eq: SelectingRoute}\\
{} & \quad \tilde{\mathbf{A}}_{u}\mathbf{x}_u = \mathbf{0}
   \label{eq: PathConstraints} \\
    {} & \quad \textbf{x}_u\in \{0,1\}^{|\tilde{\mathcal{E}}|}, x_{u,\varnothing} \in \{0,1\},\label{eq: IntegralConstraints}
\end{align}
\end{subequations}
where \(\tilde{\mathbf{a}}_u \in \mathbb{R}^{|\tilde{\mathcal{E}}|}\) is such that the constraint \eqref{eq: SelectingRoute} enforces \(\sum_{s\in R_u}x_{u,e^\ast(s)} + x_{u,\dropOut} = 1\), indicating that the AAM vehicle \(u\) will either select a path in \(R_u\) or will drop out. The matrix  \(\tilde{\mathbf{A}}_{u,s}\in \mathbb{R}^{K \times (|\tilde{\mc{E}}|)}\) in \eqref{eq: PathConstraints} represents two types of constraints: (i) \(\textbf{A}_{u}\mathbf{x}_u = \mathbf{0}\), where \(\textbf{A}_{u}\in \mathbb{R}^{|\tilde{\mathcal{R}}|\times|\tilde{\mathcal{E}}|}\) is an incidence matrix of the time-extended graph encoding flow-balance constraints at each node in \(\tilde{\mathcal{G}}\); and (ii)  \(\mathbf{B}_{u,s}\mathbf{x}_u = 0\) for each \(s\in R_u,\) where \(\mathbf{B}_{u,s}\in\mathbb{R}^{(K-|\tilde{\mathcal{R}}|)\times |\tilde{\mathcal{E}}|}\) encodes the constraint that the flow on any edge connecting two different regions along path \(s\) matches the flow on the departing edge \(e^\ast(s)\). Intuitively, (ii) ensures that any feasible solution that satisfies \eqref{eq: SelectingRoute} and (i) results in a unique edge flow. We present a simple example in Appendix \ref{sec:section21} to describe these constraints.

In \eqref{eq: UserOpt}, \eqref{eq: IOP_Objective} represents the utility derived by the AAM vehicle \(u\); \eqref{eq: BudgetConstraints} denotes the budget constraint of AAM vehicle \(u\); \eqref{eq: SelectingRoute} represents the requirement that AAM vehicle \(u\) must select at least one of the paths in \(R_u\) or consider dropping out; \eqref{eq: PathConstraints} ensures a unique edge flow for every feasible solution from \eqref{eq: SelectingRoute}; and \eqref{eq: IntegralConstraints} ensures that the selections made by AAM vehicles are integral. 
Note that the feasible set in \eqref{eq: UserOpt} is non-empty. This is because \(\mathbf{x}_u = \mathbf{0}, 
 x_{u,\varnothing} = 1,\) and \(x_{u,\outsideOpt} = w_u/p_{\outsideOpt}\) is always a feasible solution. 
% We denote \(\mathbf{x}^{\ast}_u(\mathbf{p})\) to be  a solution to \eqref{eq: IOP_Objective} under price \(\mathbf{p}\). 

The goal of the SP is to set the prices such that the resulting allocation is a \emph{competitive equilibrium}:
\begin{definition}\label{def:CompetitiveEquilibrium}
\((\bar{\mathbf{x}}^\ast, \mathbf{p}^\ast)\) is said to be a \emph{competitive equilibrium} if the following conditions are satisfied 
\begin{itemize}
    \item[(i)] For every \(u\in U,\) \(\bar{\mathbf{x}}_u^\ast\) is an optimal solution of \eqref{eq: UserOpt} with prices set to \(\mathbf{p}^\ast\); 
    \item[(ii)] The capacity constraints are satisfied. That is, for every \(e\in \tilde{\mathcal{E}}\), \(
    \sum_{u\in U}x^{\ast}_{u,e} \leq \ell_e. 
\)
\item[(iii)] \(p^{\ast}_e \geq 0\) for all \(e\in \mc{\tilde{E}}\); and if \(p^{\ast}_e > 0 \) then \(\sum_{u\in U}x_{u,e}^{\ast}(p^\ast) = \ell_e\). 
\end{itemize} 
\end{definition}
We call \(\bar{\mathbf{x}}^\ast,\mathbf{p}^\ast\) to be the market clearing allocation and market clearing prices, respectively.
In general, a competitive equilibrium may not always exist \cite{7c65302b-f079-361a-94f1-0c3c9f6fc76b}. Therefore, we introduce a relaxed version of competitive equilibrium, where we relax the requirement that allocations are integral.
\begin{definition}
    \((\bar{\mathbf{x}}^\ast, \mathbf{p}^\ast)\) is called a \emph{fractional-competitive equilibrium} if all conditions in Definition \ref{def:CompetitiveEquilibrium} are satisfied, except that in  Definition \ref{def:CompetitiveEquilibrium}-(i) the integral constraint \eqref{eq: IntegralConstraints} is relaxed to a positivity constraint. 
\end{definition}
This relaxation is inspired by the competitive equilibrium framework studied in the literature on Fisher markets with linear constraints \cite{jalota2023fisher}. 

In what follows, we demonstrate that a fractional competitive equilibrium always exists and can be computed as the solution to a fixed-point problem. Next, in Section \ref{sec: Algorithm}, we leverage this property to develop a two-step algorithmic procedure that produces an integral allocation and prices approximating a competitive equilibrium, without requiring knowledge of the private valuations of the AAM vehicles.

\subsection{Existence and Computation of Fractional Competitive Equilibrium}
We state the following existence result about the fractional competitive equilibrium.
\begin{proposition}\label{prop: MarketClearingExists}
There exists a fractional-competitive equilibrium. 
\end{proposition}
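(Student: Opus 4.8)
The plan is to realize the fractional competitive equilibrium as a primal--dual optimal pair of a single concave \emph{budget-adjusted welfare} program, and then to read off the three conditions of Definition \ref{def:CompetitiveEquilibrium} from its KKT system. I would aggregate, over all $u\in U$, the fractional relaxations of the per-vehicle feasibility constraints \eqref{eq: SelectingRoute}--\eqref{eq: PathConstraints} (with \eqref{eq: IntegralConstraints} replaced by $\bar{\mathbf x}_u\ge \mathbf 0$), adjoin the shared capacity constraints $\sum_{u\in U}x_{u,e}\le \ell_e$, and maximize a separable concave objective of Eisenberg--Gale type, $\sum_{u\in U}\beta_u\log f_u(\bar{\mathbf x}_u)$, where $f_u$ is the linear utility \eqref{eq: UtilityUser} and the weights $\beta_u$ encode a budget adjustment to be fixed later. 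The quasi-linear outside option $x_{u,\outsideOpt}$ may alternatively be pulled out into a separate linear term; either way the objective stays concave.

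\textbf{Existence of an optimizer and extraction of prices.} The feasible set is nonempty---the point $\mathbf x_u=\mathbf 0$, $x_{u,\varnothing}=1$, $x_{u,\outsideOpt}=w_u/p_\outsideOpt$ is feasible, as already observed below \eqref{eq: UserOpt}---and it is closed and bounded, since \eqref{eq: SelectingRoute} and the capacities $\ell_e$ bound every coordinate; hence it is compact and the continuous objective attains a maximum. Under a standard constraint qualification the KKT multipliers exist; I would denote by $\mathbf p^\ast=(p_e^\ast)_{e}$ the multipliers of the capacity constraints and interpret them as prices. Because every constraint other than the capacity constraints is separable across vehicles, the stationarity conditions decouple and, for each $u$, coincide with the optimality conditions of the fractionally relaxed problem \eqref{eq: UserOpt} at prices $\mathbf p^\ast$; this yields condition (i). Primal feasibility of the capacity constraints is condition (ii). Dual feasibility $p_e^\ast\ge 0$ together with complementary slackness on $\sum_u x_{u,e}\le \ell_e$ gives condition (iii): $p_e^\ast>0$ forces the constraint to be tight.

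\textbf{Closing the budget loop by a fixed point.} The remaining issue---and the crux of the argument---is to make the welfare weights $\beta_u$ (equivalently, the adjusted budgets) consistent with the budgets actually spent at $\mathbf p^\ast$, so that the budget constraint \eqref{eq: BudgetConstraints} holds with the correct $w_u$. For a plain linear Fisher market this is automatic by homogeneity of the log objective, but the quasi-linear outside option together with the equality constraints \eqref{eq: SelectingRoute}--\eqref{eq: PathConstraints} break that homogeneity, so I would enforce it through a fixed point: define the correspondence that sends a vector of candidate adjustments to the optimal dual multipliers of the relevant per-vehicle linear constraint obtained from the corresponding welfare solution. Using Berge's maximum theorem I would show this correspondence is upper hemicontinuous with nonempty, compact, convex values on a compact convex domain (the latter carved out, for instance, by $\sum_{u}\mathbf p^{\ast\top}\mathbf x_u\le \sum_u w_u$), and then invoke Kakutani's fixed-point theorem to obtain a consistent adjustment $\beta^\ast$. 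At such a fixed point the solution is genuinely budget-feasible, and the primal--dual pair constructed above is a fractional competitive equilibrium.

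\textbf{Anticipated difficulty.} I expect the fixed-point step to be the main obstacle: with linear utilities the primal allocation need not be unique, so the solution correspondence may be genuinely set-valued, and I would need to verify its convex-valuedness and upper hemicontinuity carefully---this is precisely where concavity of the log objective and Berge's theorem do the work---together with invariance of the fixed-point domain. A secondary technical point is boundary behavior, i.e. degeneracies when some prices vanish or as $p_\outsideOpt\to 0$, which I would control using the strictly positive reservation value of the outside option and the fact that a feasible drop-out allocation always exists.
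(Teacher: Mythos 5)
Your route is essentially the paper's Proposition \ref{prop: FixedPointOptimization} run in reverse: solve the budget-adjusted welfare program \eqref{eq: PlannerOptMain}, read prices off the capacity multipliers, and make budgets consistent via a fixed point of the map $\omega \mapsto \lambda^{\dagger}(\omega)$. The paper deliberately does \emph{not} prove Proposition \ref{prop: MarketClearingExists} this way: it states the fixed-point result only conditionally (``Suppose there exists $\omega^{\ast}$\dots''), and the remark following Proposition \ref{prop: FixedPointOptimization} explicitly entertains that the fixed point may fail to exist, in which case one obtains only a competitive equilibrium of a \emph{perturbed} market with budgets $w_u + \epsilon_u$ --- which is not the statement to be proved. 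Your Kakutani step is exactly where the argument breaks as sketched: (a) Berge's maximum theorem gives upper hemicontinuity of the primal argmax correspondence, not of the dual-multiplier correspondence $\omega \rightrightarrows \lambda^{\dagger}(\omega)$, so even nonemptiness, compactness, and upper hemicontinuity of that map require a separate stability argument (e.g.\ a Slater condition uniform in $\omega$); (b) more fatally, Kakutani requires a nonempty compact convex domain mapped into itself, and you provide no a priori bound showing $\lambda^{\dagger}(\omega)$ remains in the domain --- your proposed set ``$\sum_{u}\mathbf{p}^{\ast\top}\mathbf{x}_u \le \sum_u w_u$'' is a set of allocations rather than of adjustments $\omega$, and it is defined through $\mathbf{p}^{\ast}$, which itself depends on $\omega$, so the construction is circular. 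Without such an invariant compact set the fixed point, and hence the equilibrium, is not obtained; this is the missing idea, not a routine technicality.

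By contrast, the paper proves existence directly in Arrow--Debreu style: after normalizing budgets and capacities, it forms the two-player correspondence $h(\mathbf{x},\mathbf{p}) = ((\tilde{\mathbf{x}}_u(\mathbf{p}))_{u\in U}, \tilde{\mathbf{p}}(\mathbf{x}))$, where each vehicle best-responds over a truncated budget set $Q_u \cap B_u(\mathbf{p})$ and a price player maximizes $\mathbf{p}^{\top}(\sum_u \mathbf{x}_u - \mathbf{1})$ over the simplex; the delicate step is \emph{lower} hemicontinuity of $\mathbf{p} \rightrightarrows Q_u \cap B_u(\mathbf{p})$, handled by an explicit scaling construction, and Walras' law --- budgets spent with equality, which is guaranteed because the outside option does not enter \eqref{eq: SelectingRoute}--\eqref{eq: PathConstraints} --- yields capacity feasibility and the complementarity condition at the fixed point. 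Two further local errors in your sketch: the feasible set of your welfare program is \emph{not} compact, since $x_{u,\outsideOpt}$ appears in no capacity or routing constraint and is unbounded (a maximizer exists only by coercivity of $\beta_u \log f_u(\bar{\mathbf{x}}_u) - p_{\outsideOpt} x_{u,\outsideOpt}$ in that coordinate); and condition (i) does not follow from decoupled stationarity alone --- matching the individual KKT system forces $\tilde{\omega}_u = f_u(\bar{\mathbf{x}}_u^{\dagger})/(w_u+\omega_u)$, and the budget constraint \eqref{eq: BudgetConstraints} then holds precisely when $\omega_u = \lambda_u^{\dagger}$, i.e.\ exactly at the fixed point whose existence remains the unproved crux of your proposal.
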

The proof builds on the result establishing the existence of a competitive equilibrium in Fisher markets with auxiliary inequality constraints \cite{jalota2023fisher}. Specifically, our proof accounts for auxiliary \emph{equality} constraints that arise from \eqref{eq: SelectingRoute}-\eqref{eq: PathConstraints}. A detailed proof of Proposition \ref{prop: MarketClearingExists} is provided in Appendix \ref{sec:Proposition33}.

Next, we present a computational framework that can be used by the service provider for computing a fractional-competitive equilibrium, if they know the private valuations of all AAM vehicles. Consider the following optimization problem, parametrized by \(\omega\in\mathbb{R}_{\geq 0}^{|U|}:\)
\begin{subequations}
\begin{align}
       \underset{\bar{\mathbf{x}}=(\bar{\mathbf{x}}_u)_{u\in U}}{\max} &\quad  \sum_{u\in U} (w_u + \omega_u) \log\left(f_u(\bar{\mathbf{x}}_u)\right) - \sum_{u\in U} p_{\outsideOpt}x_{u,\outsideOpt}  \label{eq: Obj_SOPM} \\ 
    \text{s.t.} & \quad \sum_{u\in U} x_{u,e}\leq \ell_e \quad \forall \ e\in \tilde{\mathcal{E}} \label{eq: Cap_ConstraintM} \\
    {} & \quad \tilde{\mathbf{a}}_u^\top \mathbf{x}_u  + x_{u,\varnothing} = 1 \quad \forall \ u \in U \label{eq: SelectingRouteM}\\
{} & \quad \tilde{\mathbf{A}}_{u}\mathbf{x}_u = \mathbf{0} \quad \forall \ u \in U 
   \label{eq: PathConstraintsM} \\ 
    {} & \quad x_{u,e}\geq 0 \quad \forall \ u\in U, e\in \tilde{\mathcal{E}}\cup\{\outsideOpt, \dropOut\} \label{eq: Pos_constraintM},
\end{align}\label{eq: PlannerOptMain}
\end{subequations}
\noindent where the first term in \eqref{eq: Obj_SOPM} represents the ``budget-adjusted'' weighted geometric mean of the utilities of all AAM vehicles, while the second term accounts for the total expenditure on the outside option. Constraint \eqref{eq: Cap_ConstraintM} enforces the capacity limit on every edge, and constraints \eqref{eq: SelectingRouteM}-\eqref{eq: PathConstraintsM} are analogous to \eqref{eq: SelectingRoute}-\eqref{eq: PathConstraints}. Additionally, \eqref{eq: Pos_constraintM} is a relaxation of the integrality constraint in \eqref{eq: IntegralConstraints}. The objective in \eqref{eq: PlannerOptMain} is related to the ``budget-adjusted social optimization problem'' studied in \cite{jalota2023fisher} for Fisher markets, with the key difference being the second term, which ensures that a smaller amount of credits is spent on the outside option.  
On an intuitive level, the weighted geometric mean structure of the objective in \eqref{eq: Obj_SOPM} can be seen as finding an allocation that balances the trade-offs between different AAM vehicles’ utilities, weighted by their budget adjustments. It ensures that an improvement in a vehicle’s utility contributes to the overall objective in proportion to its market power. At the optimal point, this results in a fair and efficient allocation of airspace among AAM vehicles.
On a more technical level, the weighted geometric mean is fundamental in the proof of Proposition 4.5, which ensures that if we can adjust the weights of AAM vehicles in an appropriate manner (through a careful choice of \(\omega\)) then the optimal dual multiplier of (3b) is the market clearing price and the optimizer of (3) are market clearing allocations. 

Before stating Proposition \ref{prop: FixedPointOptimization}, we present an important property of \eqref{eq: PlannerOptMain}.
\begin{lemma}\label{lem: ConvexOptMain}
    The constraints \eqref{eq: Cap_ConstraintM}-\eqref{eq: Pos_constraintM} always have a feasible solution. Furthermore, for any \(\omega\in \mathbb{R}^{|U|}_{+}\), \eqref{eq: PlannerOptMain} is a convex optimization problem. 
\end{lemma}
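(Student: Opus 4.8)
The plan is to treat the two assertions separately, since each reduces to an elementary structural check. For feasibility of \eqref{eq: Cap_ConstraintM}-\eqref{eq: Pos_constraintM}, I would exhibit the explicit ``everybody drops out'' point: set \(\mathbf{x}_u = \mathbf{0}\) and \(x_{u,\outsideOpt} = 0\) for every \(u \in U\), together with \(x_{u,\varnothing} = 1\). I then verify the four constraint families in turn. The left-hand side of \eqref{eq: Cap_ConstraintM} is \(0 \le \ell_e\) (using that every capacity is nonnegative); \eqref{eq: SelectingRouteM} reads \(\tilde{\mathbf{a}}_u^\top \mathbf{0} + 1 = 1\); \eqref{eq: PathConstraintsM} reads \(\tilde{\mathbf{A}}_u \mathbf{0} = \mathbf{0}\); and all coordinates are nonnegative, giving \eqref{eq: Pos_constraintM}. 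This is essentially the same drop-out point already observed to be feasible for \eqref{eq: UserOpt}, and since the planner constraints omit the budget constraint \eqref{eq: BudgetConstraints}, the verification is immediate.

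For convexity, I would argue that the feasible set is convex and the objective is concave. The feasible set is cut out by the linear inequalities \eqref{eq: Cap_ConstraintM} and \eqref{eq: Pos_constraintM} together with the affine equalities \eqref{eq: SelectingRouteM}-\eqref{eq: PathConstraintsM}, hence is a polyhedron and in particular convex. For the objective, the key observation is that each \(f_u(\bar{\mathbf{x}}_u)\) in \eqref{eq: UtilityUser} is a nonnegative linear combination of the decision variables, i.e.\ an affine (indeed linear) function of \(\bar{\mathbf{x}}_u\). Since \(\log\) is concave and nondecreasing, the composition \(\log(f_u(\bar{\mathbf{x}}_u))\) is concave by the standard composition rule. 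The coefficients \(w_u + \omega_u\) are nonnegative, because \(w_u \ge 0\) by definition of the budget and \(\omega \in \mathbb{R}^{|U|}_+\) by hypothesis, so each term \((w_u+\omega_u)\log(f_u(\bar{\mathbf{x}}_u))\) remains concave, while the second term \(-\sum_{u} p_{\outsideOpt} x_{u,\outsideOpt}\) is linear. A sum of concave functions is concave, so the objective \eqref{eq: Obj_SOPM} is concave, and maximizing a concave objective over a convex set is by definition a convex program.

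The only point that needs a little care is the domain of the logarithm: \(\log(f_u)\) is real-valued only where \(f_u > 0\). I would treat \(\log\) as the extended-real-valued concave function equal to \(-\infty\) at \(0\), which preserves concavity and does not affect the classification of \eqref{eq: PlannerOptMain} as convex; the maximization simply never selects a point with some \(f_u = 0\) provided a strictly-positive-utility feasible point exists (for instance the drop-out point when \(v_{u,\varnothing} > 0\), or a point consuming the outside option when \(v_{u,\outsideOpt} > 0\)). I do not expect any genuine obstacle here, as both claims are routine verifications; the crux is merely to record where nonnegativity of \(w_u + \omega_u\) is used, since a negative weight would flip the corresponding term from concave to convex. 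The real purpose of the lemma is to certify that \eqref{eq: PlannerOptMain} is well-posed and amenable to convex-duality tools, which the subsequent fixed-point and pricing arguments (Proposition \ref{prop: FixedPointOptimization}) rely on.
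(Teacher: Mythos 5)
Your proposal is correct and takes essentially the same route as the paper, whose proof is a one-line observation that the constraint set of \eqref{eq: PlannerOptMain} is a polytope and that each \(f_u\) is linear with nonnegative coefficients, so the weighted logarithmic objective is concave. Your explicit drop-out feasible point \((\mathbf{x}_u,x_{u,\outsideOpt},x_{u,\varnothing})=(\mathbf{0},0,1)\) and your care about the domain of \(\log\) merely flesh out details the paper leaves implicit.
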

The proof of this result follows from the fact that the constraint set in \eqref{eq: PlannerOptMain} is a polytope. Moreover, the objective \(f_u(\cdot)\) is a linear function with positive coefficients. 
For any \(\omega\in \mathbb{R}^{|U|}_{+}\), let \(\mathbf{p}^\dagger(\omega)\) denote an optimal dual multiplier corresponding to the constraint \eqref{eq: Cap_ConstraintM}, \(\lambda^\dagger(\omega)\) denote an optimal dual multiplier corresponding to the constraint \eqref{eq: SelectingRouteM}, and \(\bar{\mathbf{x}}^\dagger(\omega)\) denote an optimal solution to \eqref{eq: PlannerOptMain}.   
\begin{proposition}\label{prop: FixedPointOptimization}
    Suppose there exists \(\omega^\ast\in \mathbb{R}^{|U|}_{+}\) that is a fixed point of the mapping \(\omega \mapsto  \lambda^\dagger(\omega)\). Then \((\bar{\mathbf{x}}^\dagger(\omega^\ast), \mathbf{p}^\dagger(\omega^\ast))\) is a fractional-competitive equilibrium. 
    \end{proposition}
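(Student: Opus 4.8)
The plan is to verify the three defining conditions of a fractional-competitive equilibrium (the relaxed Definition \ref{def:CompetitiveEquilibrium}) for the pair $(\bar{\mathbf{x}}^\dagger(\omega^\ast), \mathbf{p}^\dagger(\omega^\ast))$ by comparing the KKT system of the budget-adjusted welfare problem \eqref{eq: PlannerOptMain}, evaluated at the fixed point $\omega^\ast$, against the KKT system of each agent's relaxed problem \eqref{eq: UserOpt}. By Lemma \ref{lem: ConvexOptMain}, \eqref{eq: PlannerOptMain} is a convex program with a nonempty polyhedral feasible set; since all of its constraints are affine, the KKT conditions are necessary and sufficient for optimality. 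The relaxed version of \eqref{eq: UserOpt} (integrality in \eqref{eq: IntegralConstraints} replaced by positivity) is a linear program, for which KKT is likewise necessary and sufficient. Conditions (ii) and (iii) are then immediate: primal feasibility of $\bar{\mathbf{x}}^\dagger$ gives the capacity constraints (ii); and because \eqref{eq: Cap_ConstraintM} is an inequality, its dual multiplier $\mathbf{p}^\dagger(\omega^\ast)$ is nonnegative, while complementary slackness yields $p_e^\dagger > 0 \Rightarrow \sum_{u} x_{u,e}^\dagger = \ell_e$, which is (iii). The substance of the proof is condition (i): that each $\bar{\mathbf{x}}_u^\dagger$ solves \eqref{eq: UserOpt} at prices $\mathbf{p}^\dagger$.

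To establish (i), I would introduce the marginal-utility-of-budget quantity $\gamma_u := (w_u + \omega_u^\ast)/f_u(\bar{\mathbf{x}}_u^\dagger)$, which is strictly positive under the standing assumption $w_u > 0$ and the fact that $f_u(\bar{\mathbf{x}}_u^\dagger) > 0$ (guaranteed because $x_{u,\varnothing}=1$ is feasible and the drop-out option carries positive value). Writing out the stationarity conditions of \eqref{eq: PlannerOptMain}, in which $\mathbf{p}^\dagger$, the route-selection multiplier $\lambda^\dagger$, the path multipliers, and the nonnegativity multipliers appear, and dividing through by $\gamma_u$, reproduces exactly the stationarity conditions of the agent's LP, under the identification of the budget multiplier as $\beta_u = 1/\gamma_u$, the \eqref{eq: SelectingRouteM} multiplier as $\lambda_u^\dagger/\gamma_u$, and the path and nonnegativity multipliers scaled by $1/\gamma_u$. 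Feasibility of $\bar{\mathbf{x}}_u^\dagger$ transfers constraints \eqref{eq: SelectingRouteM}--\eqref{eq: PathConstraintsM} to the individual problem, and nonnegativity complementary slackness transfers verbatim under positive scaling. Since $\beta_u = 1/\gamma_u > 0$, the only remaining KKT requirement is complementary slackness of the budget constraint \eqref{eq: BudgetConstraints}, i.e. that it binds: $(\mathbf{p}^\dagger)^\top \mathbf{x}_u^\dagger + p_{\outsideOpt} x_{u,\outsideOpt}^\dagger = w_u$.

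Establishing this budget identity is the crux, and the only place the fixed-point hypothesis is used. Because $f_u$ is linear, Euler's identity gives $f_u(\bar{\mathbf{x}}_u^\dagger) = \sum_{e} x_{u,e}^\dagger \, \partial_{x_{u,e}} f_u + v_{u,\outsideOpt} x_{u,\outsideOpt}^\dagger + v_{u,\varnothing} x_{u,\varnothing}^\dagger$. Substituting each partial derivative from the planner's stationarity conditions, multiplying by the corresponding primal value, and summing over all variables, the path contribution vanishes by $\tilde{\mathbf{A}}_u \mathbf{x}_u^\dagger = \mathbf{0}$, the nonnegativity contributions vanish by complementary slackness, and the route-selection terms collapse via $\tilde{\mathbf{a}}_u^\top \mathbf{x}_u^\dagger + x_{u,\varnothing}^\dagger = 1$. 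This yields $\gamma_u f_u(\bar{\mathbf{x}}_u^\dagger) = (\mathbf{p}^\dagger)^\top \mathbf{x}_u^\dagger + p_{\outsideOpt} x_{u,\outsideOpt}^\dagger + \lambda_u^\dagger$. Since $\gamma_u f_u(\bar{\mathbf{x}}_u^\dagger) = w_u + \omega_u^\ast$ by definition, we obtain $w_u + \omega_u^\ast = (\mathbf{p}^\dagger)^\top \mathbf{x}_u^\dagger + p_{\outsideOpt} x_{u,\outsideOpt}^\dagger + \lambda_u^\dagger$. Invoking the fixed-point hypothesis $\omega_u^\ast = \lambda_u^\dagger(\omega^\ast)$ then cancels $\omega_u^\ast$ against $\lambda_u^\dagger$, leaving precisely $w_u = (\mathbf{p}^\dagger)^\top \mathbf{x}_u^\dagger + p_{\outsideOpt} x_{u,\outsideOpt}^\dagger$, the required budget tightness.

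With budget complementary slackness in hand, the full KKT system of each relaxed \eqref{eq: UserOpt} holds, so $\bar{\mathbf{x}}_u^\dagger$ is optimal for every agent at prices $\mathbf{p}^\dagger(\omega^\ast)$, giving condition (i); combined with (ii) and (iii), the pair is a fractional-competitive equilibrium. I expect the main obstacle to be the budget-tightness computation: carrying the correct sign and scaling of every dual multiplier through the Euler summation so that the term surviving the collapse is exactly the $\lambda_u^\dagger$ that the fixed point pins to $\omega_u^\ast$. A secondary, lighter technical point is ensuring $f_u(\bar{\mathbf{x}}_u^\dagger) > 0$ and $w_u + \omega_u^\ast > 0$, so that $\gamma_u$ and the logarithmic objective are well-defined; this should be stated explicitly before dividing by $\gamma_u$.
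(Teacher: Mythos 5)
Your proposal is correct and takes essentially the same route as the paper's proof: both dispose of conditions (ii)--(iii) via primal/dual feasibility and complementary slackness of \eqref{eq: PlannerOptMain}, and establish condition (i) by matching the KKT systems of the planner and relaxed individual problems under exactly the scaling you use (budget multiplier $f_u(\bar{\mathbf{x}}_u^\dagger)/(w_u+\omega_u^\ast)$, route and path multipliers scaled likewise), with your Euler-identity summation being precisely the paper's step of summing the three complementary slackness cases to get $w_u + \omega_u^\ast = (\mathbf{p}^\dagger)^\top\mathbf{x}_u^\dagger + p_{o}x_{u,o}^\dagger + \lambda_u^\dagger$ and then cancelling via the fixed point $\omega^\ast = \lambda^\dagger(\omega^\ast)$. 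The only difference is cosmetic: you explicitly flag the positivity of $f_u(\bar{\mathbf{x}}_u^\dagger)$ and $w_u+\omega_u^\ast$ before dividing, a point the paper leaves implicit.
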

The proof relies on the convexity of the optimization problems \eqref{eq: PlannerOptMain} (Lemma \ref{lem: ConvexOptMain}) and \eqref{eq: UserOpt} (after relaxing the integrality constraint in \eqref{eq: IntegralConstraints} to fractional in \eqref{eq: Pos_constraintM}), along with matching the KKT conditions for optimality. The detailed proof of Proposition \ref{prop: FixedPointOptimization} is provided in Appendix \ref{sec:Proposition35}. 

\begin{remark}
   The proof of Proposition \ref{prop: FixedPointOptimization} can be extended to settings where a fixed point may not exist. Suppose there exists \(\omega^\ast \in \mathbb{R}^{|U|}_{+}\) such that, for each \(u \in U\), \(\omega^\ast_u - \lambda^\dagger_u(\omega^\ast) = \epsilon_u\) for some \(\epsilon_u\in \mathbb{R}\) ensuring that \(w_u+\epsilon_u\geq 0\). Then \((\bar{\mathbf{x}}^\dagger(\omega^\ast), \mathbf{p}^\dagger(\omega^\ast))\) is a fractional-competitive equilibrium of a market where, for each \(u \in U\), the budget is adjusted to \(w_u + \epsilon_u\).
\end{remark}
\section{Algorithmic Design of Auction Mechanism without Private Valuations}\label{sec: Algorithm}
In this section, we outline our algorithmic procedure for the service provider (SP) to compute (approximate) competitive equilibria, using Proposition \ref{prop: FixedPointOptimization}, without knowing the private valuations of AAM vehicles. Our approach unfolds in two stages. First, we introduce an algorithm that solves the fixed-point equation from Proposition \ref{prop: FixedPointOptimization} to compute the fractional-competitive equilibrium in a distributed manner (cf. Algorithm \ref{algo:1}). The SP then generates a ranked list of AAM vehicles using the prices and fractional allocations derived from this step. This ranking allows the SP to achieve an integral allocation by successively assigning regions to AAM vehicles according to the ranking (cf. Algorithm \ref{algo:2}). In the following subsections, we elaborate on each of these steps.

\subsection{{\textbf{Step 1:}} Distributed Algorithm for Computing Fractional-Competitive Equilibrium}
To compute a fractional-competitive equilibrium, we propose Algorithm \ref{algo:1} to solve the fixed-point problem described in Proposition \ref{prop: FixedPointOptimization} in a distributed manner. Algorithm \ref{algo:1} emulates the fixed-point iteration for the mapping
\begin{align}\label{eq: FixedPointError}\tag{\textsf{FP}}
    \omega \mapsto \lambda^{\dagger}(\omega).
\end{align}
Since the SP lacks access to \(\lambda^{\dagger}(\omega),\) as computing it requires solving \eqref{eq: PlannerOptMain}, which in turn depends on the private valuations of AAM vehicles, we adopt a two-loop approach to circumvent this challenge. In the inner loop, we iteratively solve the convex optimization problem \eqref{eq: PlannerOptMain} in a distributed manner that does not require the SP to access private valuations of AAM vehicles. This is done by repeatedly interacting with the AAM vehicles for a finite number of rounds, while holding \(\omega\) constant, to approximate \(\lambda^{\dagger}(\omega)\). This approximation is then used to update \(\omega\) using a fixed point iteration in the outer loop.

To solve the inner-loop problem in a distributed manner, we reformulate \eqref{eq: PlannerOptMain} into \eqref{eq: PlannerOptADMM} by introducing two additional variables, \({\mathbf{y}}\) and \({\mathbf{z}}\).  This reformulation enables the use of distributed optimization techniques, facilitating distributed computation across multiple agents while preserving the structure of the original problem.
\begin{subequations}
    \begin{align}
    \min_{(\bar{\mathbf{x}}_u, {\mathbf{y}}_u)_{u\in U}, ({z}_e)_{e\in \tilde{\mathcal{E}}}} &\quad  \sum_{u\in U}
(w_u + \omega_u) \log\left(f_u(\bar{\mathbf{x}}_u)\right) - \sum_{u\in U} p_{\outsideOpt}x_{u,\outsideOpt}  \label{eq: PlannerObjReformulate}\\ 
    \text{s.t.} & \quad \mathbf{y}_u = \mathbf{x}_u\quad \forall \ u\in U, \label{eq: PlannerObjReformulatex_y}
    \\ 
    {}& 
    \quad \sum_{u\in U} y_{u,e} + z_e = \ell_e \quad \forall \ e\in \tilde{\mathcal{E}} \label{eq: PlannerObjReformulate_Slack} \\
    {} & \quad \tilde{\mathbf{a}}_u^\top \mathbf{x}_u  + x_{u,\varnothing} = 1 \label{eq: SelectingRouteMRef}\\
{} & \quad \tilde{\mathbf{A}}_{u}\mathbf{x}_u = \mathbf{0}
   \label{eq: PathConstraintsMRef}  \\ 
    {} & \quad \bar{\mathbf{x}}_u\geq \mathbf{0}, {\mathbf{z}} \geq \mathbf{0} , \mathbf{y}_u\in\mathbb{R}^{|\tilde{\mathcal{E}}|}, \quad \forall \ u\in U. \label{eq: PlannerObjReformulate_PosX}
\end{align}\label{eq: PlannerOptADMM}
\end{subequations}

\noindent In this reformulation, Equation \eqref{eq: PlannerObjReformulatex_y} enforces the equality between \({\mathbf{x}}\) and  \({\mathbf{y}}\),  while Equation \eqref{eq: PlannerObjReformulate_Slack} ensures that capacity constraints are met. Constraints \eqref{eq: SelectingRouteMRef}-\eqref{eq: PlannerObjReformulate_PosX} is identical to \eqref{eq: SelectingRouteM}-\eqref{eq: Pos_constraintM} along with additional positivity constraints on \({\mathbf{y}}, {\mathbf{z}}.\) 
This reformulation ensures that the Lagrangian of \eqref{eq: PlannerOptADMM} becomes a separable function of \(\bar{\mathbf{x}}_u\), allowing the problem to be solved in a distributed manner using the ADMM algorithm \cite{he2023extensions, jalota2023fisher}. The variable \({\mathbf{y}}\) can be interpreted as the ``expected allocation'' estimate of the service provider (SP), while \({\mathbf{z}}\) represents the ``resource surplus'' in each region.
 Next, we describe the inner and outer loops in detail. We index the inner loop iterations by \(n = 1,2,...,N\) and the outer loop iterations by \(k=1, 2, ..., K\). A flowchart of our algorithm in Step 1 is shown in Figure \ref{fig:Picture Algorithm 1}.  

 \begin{figure}
     \centering
     \includegraphics[width=0.9\linewidth]{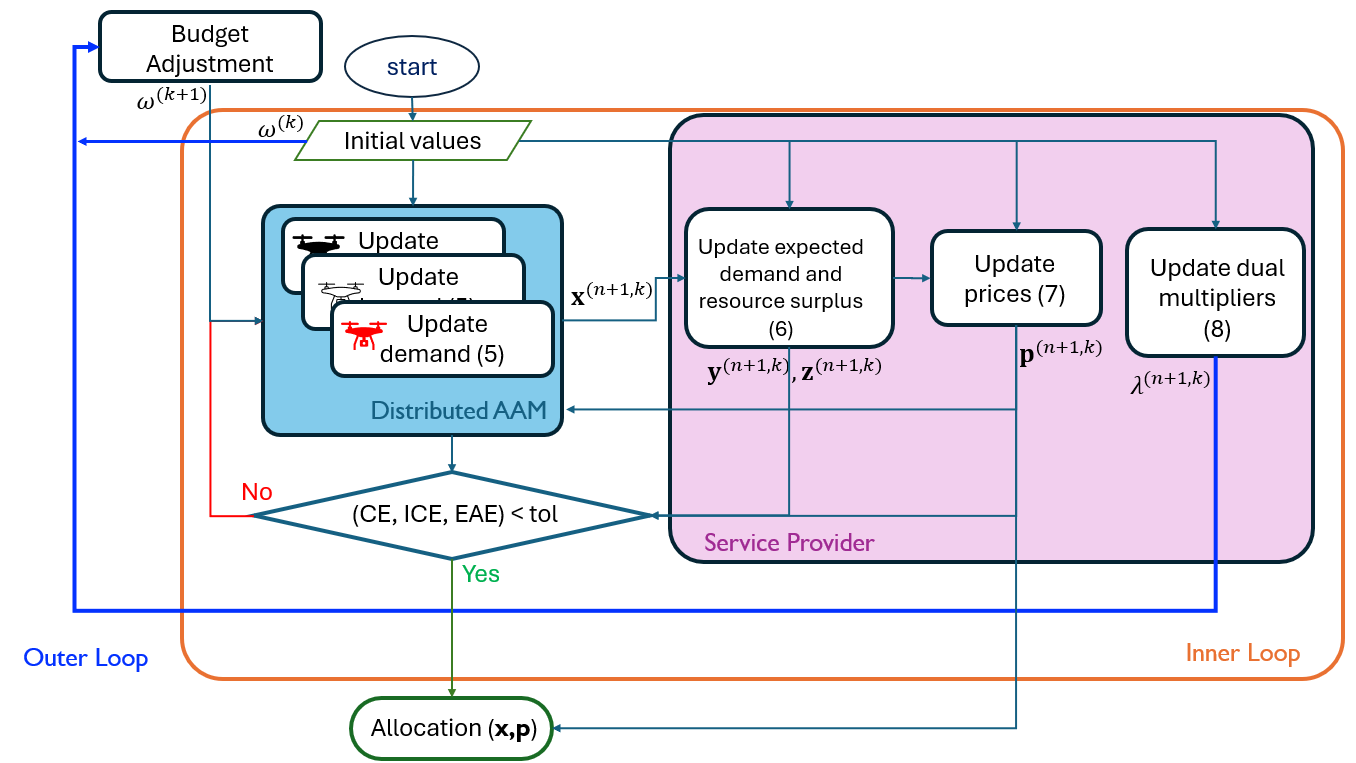}
     \caption{Flowchart of Algorithm \ref{algo:1}, illustrating the processes executed independently by the SP and AAM vehicles, as well as the steps computed within the inner and outer loops.}
     \label{fig:Picture Algorithm 1}
 \end{figure}
 % , for some positive integers \(K, N\).   
 
\begin{algorithm}
\fontsize{9pt}{12pt}\selectfont
    \caption{Distributed Algorithm for Fractional Competitive Equilibrium}
    \label{algo:1}
\begin{algorithmic}[1] % The number [1] indicates that lines are to be numbered
    \State $\textbf{Input:}~ \goodpricevector^{(0,0)} = \mathbf{0}, \mathbf{\lambda}^{(0,0)}=\mathbf{0}, \mathbf{y}^{(0,0)} = \mathbf{0}, \omega^{(0)} = \mathbf{0}, \textsf{tol}_{\textsf{CE}}, \textsf{tol}_{\textsf{ICE}}, \textsf{tol}_{\textsf{EAE}}, \beta$
    \For{\(k = 0, 1, 2, ..., K-1\)}\Comment{\texttt{\color{magenta}Outer loop starts}}
        \For{\(n = 0,1,..., N-1\)}\Comment{\texttt{\color{magenta}Inner loop starts}}

        \Statex {\color{magenta}\texttt{Distributed Updates:}}
       
        \State Using \(\mathbf{p}^{(n,k)},\) \(\mathbf{y}_u^{(n,k)}, \lambda_u^{(n,k)},\) \(\omega_u^{(k)}\), each AAM vehicle independently updates its demand \(\bar{\mathbf{x}}_u^{(n+1,k)}\) as follows 
        \begin{equation}
\begin{aligned}\label{eq: X_update}
    \bar{\mathbf{x}}_u^{(n+1, k)} = &\argmax_{\bar{\mathbf{x}}_u, \ \text{s.t.}~\eqref{eq: PathConstraintsM}-\eqref{eq: Pos_constraintM}~\text{hold}}\Bigg((w_u+\omega_u^{(k)})\log\left(f_u(\bar{\mathbf{x}}_u)\right) - p_{\outsideOpt}x_{u,\outsideOpt}   -\sum_{e\in\tilde{\mathcal{E}}}p^{(n,k)}_ex_{u,e}- \lambda_u^{(n,k)}\cdot (\tilde{\mathbf{a}}_u^\top\mathbf{x}_u + x_{u,\varnothing} - 1)  \\ &\quad \hspace{2cm}-  \frac{\beta}{2} (\tilde{\mathbf{a}}_u^\top\mathbf{x}_u + x_{u,\varnothing} - 1)^2-\frac{\beta}{2}  \|\mathbf{y}_u^{(n,k)}-\mathbf{x}_u\|^2 
    \Bigg),
\end{aligned}
\end{equation}
        
 where \(\beta\) is a positive scalar that represents the step-size in the ADMM algorithm (cf. Appendix \ref{sec:ADMM}).
        \Statex {\color{magenta}\texttt{Updates by Service Provider (SP):}}
        \State Using \((\bar{\mathbf{x}}_u^{(n+1,k)})_{u\in U}\), SP updates the expected allocation \(\mathbf{y}^{(n+1,k)}\) and resource surplus \(\mathbf{z}^{(n+1,k)}\) as follows:
        \begin{equation}\label{eq: y_barz_Update}
\begin{aligned}
    &(\mathbf{y}^{(n+1,k)},{\mathbf{z}}^{(n+1,k)}) = \argmax_{{\mathbf{y}}\in \mathbb{R}^{U \times |\tilde{\mathcal{E}}|}, \ {\mathbf{z}} \in \mathbb{R}^{|\tilde{\mathcal{E}}|}_+} \Bigg( - \frac{\beta}{2} \sum_{u\in U}\|\mathbf{y}_{u}-\mathbf{x}_{u}^{(n+1,k)}\|^2 - \frac{\beta}{2} \|\sum_{u\in U}\mathbf{y}_u + \mathbf{z} - \ell\|^2 - \sum_{e\in \tilde{\mathcal{E}}} p^{(n,k)}_e {z}_e\Bigg).
\end{aligned}
\end{equation}
        
        \State Using \((\mathbf{y}^{(n+1,k)}, \mathbf{z}^{(n+1,k)}),\) SP updates the price estimates as follows
\begin{equation}\label{eq: price_Update}
\begin{aligned}
    \textbf{p}^{(n+1,k)} = \textbf{p}^{(n,k)} + \beta \left( \sum_{u\in U} \textbf{y}_u^{(n+1,k)} + {\mathbf{z}}^{(n+1,k)} - \ell \right).
\end{aligned}
\end{equation}
        
        \State
        For every \(u\in U,\) using \(\bar{\mathbf{x}}_u^{(n+1, k)},\) SP updates the dual multiplier \(\lambda_u^{(n+1,k)}\) as follows: 
\begin{equation}\label{eq: lambda_update}
\begin{aligned}
\lambda_u^{(n+1,k)} = {\lambda}_u^{(n,k)} + \beta \left(\tilde{\mathbf{a}}_u^\top\mathbf{x}_u^{(n+1,k)} + x_{u,\varnothing}^{(n+1,k)} - 1 \right).
\end{aligned}
\end{equation}

        \If{$\textsf{CE}(\bar{\mathbf{x}}^{(n+1,k)}, {\mathbf{p}}^{(n+1,k)}) \leq \textsf{tol}_{\textsf{CE}},  \textsf{ICE}(\bar{\mathbf{x}}^{(n+1,k)}) \leq \textsf{tol}_{\textsf{ICE}}, \textsf{EAE}(\mathbf{x}^{(n+1,k)}, \mathbf{y}^{(n+1,k)})\leq \textsf{tol}_{\textsf{EAE}}$}
        \State \textbf{Return} \(\bar{\mathbf{x}}^\dagger= \bar{\mathbf{x}}^{(n+1,k)}, \bar{\mathbf{p}}^\dagger = {\mathbf{p}}^{(n+1,k)}\)
        \EndIf
         \EndFor\Comment{\texttt{\color{magenta}Inner loop ends}}
         \State For every \(u\in U,\) SP updates the  \(\omega^{(k+1)} = \lambda^{(N,k)}\)
         \State \(\mathbf{p}^{(0,k+1)} \leftarrow \mathbf{p}^{(N,k)},  \mathbf{y}^{(0,k+1)} \leftarrow \mathbf{y}^{(N,k)}\)
    \EndFor\Comment{\texttt{\color{magenta}Outer loop ends}}
    \State \textbf{Return} \(\bar{\mathbf{x}}^\dagger  = \bar{\mathbf{x}}^{(N,K)}, \bar{\mathbf{y}}^\dagger = {\mathbf{p}}^{(N,K)}\)
\end{algorithmic}
\end{algorithm}

\textit{\textbf{Inner Loop:}} Inner loop iterations are obtained by performing ADMM iterations\footnote{Derivation of ADMM updates for \eqref{eq: PlannerOptADMM} is provided in Appendix \ref{sec:ADMM}.} for \eqref{eq: PlannerOptADMM} with step-size parameter \(\beta\). For any \(\omega\in \mathbb{R}_{\geq 0}^{|U|},\) this implementation allows us to estimate the dual multiplier \(\lambda^\dagger(\omega)\) in a distributed manner without requiring knowledge of the private valuations of AAM vehicles. Next, we describe these iterations in more detail.

Given that outer loop is at iteration \(k\), at any iteration \(n\) of the inner loop: \((a)\) every AAM vehicle \(u\) keeps track of its individual demand \(\bar{\mathbf{x}}^{(n,k)}_u\), and \((b)\) the SP keeps track of three quantities: an estimate of expected allocations \({\mathbf{y}}^{(n,k)}\), expected resource surplus \({\mathbf{z}}^{(n,k)}\), price of all regions \(\mathbf{p}^{(n,k)}\), and a dual multiplier \(\lambda^{(n,k)}\) which is used to adjust budgets of AAM vehicles. 

\underline{Local update for each AAM vehicle:}
Given that the outer loop is at iteration \(k\), at every iteration \(n\) of the inner loop, each AAM vehicle \(u\) receives its expected expected allocation \({\mathbf{y}}^{(n,k)}_u\), the current prices on regions \(\mathbf{p}^{(n,k)}\) and the dual multiplier corresponding to its local constraints \(\lambda^{(n,k)}_u\). Using this information, AAM vehicle updates its requested demand using \eqref{eq: X_update} and shares this with SP.  

\begin{remark}\label{rem: AAM vehicleUpdate}
    The update in Equation \eqref{eq: X_update} can be implemented through a ``proxy bidding agent'' in place of an actual AAM vehicle. The AAM vehicle operator can have the proxy agent at their local end where they feed AAM vehicle valuation and then the proxy agent participates on behalf of AAM vehicle, which ensures that no on-board energy resources are used to run Algorithm 1. This proxy attempts to maximize the vehicle's budget-adjusted utility while penalizing deviations from the expected allocation, overspending artificial currency, and violating constraints. These constraints capture the requirement that the bundle of edges selected by the AAM vehicle results in a feasible path on the time-extended graph. 
\end{remark}

\underline{Updates by SP:}
Using the demand from AAM vehicles, the SP updates the expected allocation and the excess supply through \eqref{eq: y_barz_Update}.  
The objective in \eqref{eq: y_barz_Update} requires that SP minimize three terms: \((i)\) difference between the expected allocation by SP and the demand sent by AAM vehicles; \((ii)\) violation of constraints in the resources; and \((iii)\) the unused capacity is minimized on any resource with a positive price. 

Next, the SP updates the price estimates using the updated values of the expected allocation and the excess supply through \eqref{eq: price_Update}. 
Equation \eqref{eq: price_Update} resembles the idea that an SP should increase the price if the capacity constraint is violated and reduce it if there is available capacity. 

Finally, the SP updates the dual multiplier estimate \(\lambda_u,\) for every AAM vehicle \(u\) using \eqref{eq: lambda_update}. 

\textbf{\textit{Outer loop.}} In the outer loop, we update the budget adjustment after every \(N\) iteration of the inner loop, using the value of \(\lambda^{(N,k)}\) to approximate the fixed-point iteration (line 12 of Algorithm \ref{algo:1}).  This step ensures that budget adjustments progressively converge toward equilibrium by iteratively refining the dual variables based on the current solution from the inner loop.

\textbf{Termination criterion:}
We terminate the algorithm once \textbf{all} of the following errors fall below their predefined threshold: 
\begin{itemize}
    \item Complementarity error (\textsf{CE}): Smaller values of \textsf{CE} ensure that resources with a positive price maintain a balance between demand and supply, while resources priced at zero satisfy capacity constraints. We define \begin{align}\label{eq: MCE}
\textsf{CE}(\bar{\mathbf{x}}, \mathbf{p})= \sqrt{\sum_{e\in \tilde{\mathcal{E}}} p_e^2 z_e^2},        
    \end{align}
where \({z}_e = \sum_{u\in U}x_{u,e} - \ell_e\) is the \emph{excess demand}. 
% \begin{align*}
%     z_e^2 = (\sum_{u\in U}x_{u,e})^2  + \ell_e^2 - 2\ell_e \sum_{u\in U}x_{u,e} = |U|^2 + C_{\max}^2 + 2C_{\max}|U| \leq (|U|+C_{\max})^2
% \end{align*}
% So the (loose) upper bound on the MCE would be 
% \begin{align*}
%     \sqrt{\sum_{e\in \tilde{\mathcal{E}}} p_e^2 z_e^2} \leq (|U|+C_{\max})B\sqrt{|E|}
% \end{align*}
 The definition of \textsf{CE} is motivated from the ``complementarity condition'' in general equilibrium theory in economics \cite{mas2006microeconomic}. We define \(\textsf{tol}_{\textsf{CE}}\) to be the threshold for this error.

\item Individual constraint error (\textsf{ICE}): Smaller values of \textsf{ICE} ensure that \eqref{eq: SelectingRouteMRef} constraint is satisfied. We define 
\begin{align}\label{eq: ICE}
    \textsf{ICE}(\bar{\mathbf{x}})= {\max_{u\in U} \|\tilde{\mathbf{a}}_u^\top\mathbf{x}_u + x_{u,\varnothing} - 1\|_{\infty}}.
\end{align}
 We define \(\textsf{tol}_{\textsf{ICE}}\) to be the threshold for this error.
\item Expected allocation error (\textsf{EAE}): Smaller value of \textsf{EAE} ensure that \eqref{eq: PlannerObjReformulatex_y} is satisfied. We define  
\begin{align}\label{eq: BDE}
    \textsf{EAE}(\mathbf{x}, \mathbf{y})= \max_{u\in U} \|\mathbf{y}_u-\mathbf{x}_u\|_{\infty}.
\end{align}
We define \(\textsf{tol}_{\textsf{EAE}}\) to be the threshold for this error.
\end{itemize}

We represent the output of Algorithm \ref{algo:1} by (\(\bar{\mathbf{x}}^{\dagger}, {\mathbf{p}}^{\dagger})\). 

\begin{remark}
Algorithm \ref{algo:1} bears resemblance to online learning algorithms for computing market equilibrium \cite{kleinberg2003value, li2020simple, bistritz2021online}. In this process, AAM vehicles iteratively adjust their demand based on anonymized signals—such as prices, expected allocation, resource surplus, and dual multipliers—broadcasted by the SP. Meanwhile, the SP dynamically updates these signals in response to the aggregate demand from AAM vehicles, without requiring knowledge of their private valuations.
\end{remark}

\subsection{{\textbf{Step 2:}} Computing Integral Allocation}
\begin{figure} % 'r' for right, 'l' for left
    \centering
    \includegraphics[width=0.8\textwidth]{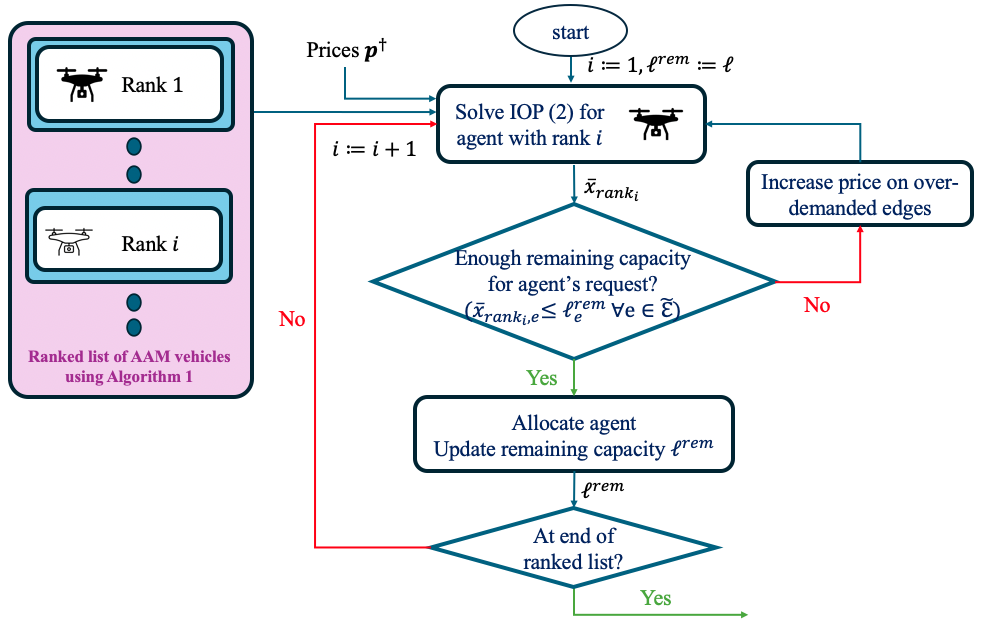}
    \caption{Flowchart of Algorithm \ref{algo:2}, illustrating the ranking system and the removal of over-demanded edges.}
    \label{fig: SchematicAlgorithm2}
\end{figure}

Using the output from Algorithm \ref{algo:1}, the service provider (SP) computes an integral allocation in a distributed manner using Algorithm \ref{algo:2}. 
The SP sets the airspace price to \(\mathbf{p}^\dagger\) (the output of Algorithm \ref{algo:1}) and generates a \textit{ranked list} of AAM vehicles based on \(\bar{\mathbf{x}}^\dagger\) (cf. Sec. \ref{sssec:RankedList}). This list is created by ranking the AAM vehicles in descending order according to the numerical value of the (fractional) allocation of their preferred routes (cf. Sec. \ref{sssec:IntegralAllocation}).

\subsubsection{Ranked List.}\label{sssec:RankedList} To generate a ranked list of AAM vehicles, we define \(s^\ast(u)\) to be the most desired route of AAM vehicle \(u\) in \(R_u\). Using \(\bar{\mathbf{x}}^\dagger\) from Algorithm \ref{algo:1}, the SP creates a ranking over agents based on decreasing values of \(x^\dagger_{u,e^\ast(s^\ast(u))}\), where \(e^\ast(s^\ast(u))\) denotes the departing edge on the route \(s^\ast(u)\). We denote the ranked list\footnote{Ties are broken arbitrarily.} of AAM vehicles by \(\textsf{\textbf{rank}}\).

\subsubsection{Integral Allocation.}\label{sssec:IntegralAllocation} 
After generating the ranked list, the SP fixes the prices for all resources based on the output of Algorithm \ref{algo:1} (i.e. $\bar{\mathbf{p}}^\dagger $ ) and iterates over AAM vehicles according to \textbf{\textsf{rank}} (cf. Line 3  in Algorithm \ref{algo:2}). Each AAM vehicle is allocated its most desired feasible route (cf. Line 5-10  in Algorithm \ref{algo:2}), subject to the current resource capacity. Suppose a capacity constraint is violated on any resource. In that case, the SP either removes that resource from the available pool for all remaining agents or increases its price to infinity for the remaining AAM vehicles (as described in line 12 of Algorithm \ref{algo:2}). See Fig. \ref{fig: SchematicAlgorithm2} for a schematic illustration.

\begin{algorithm}
    \caption{Integral Allocation Based on Ranked List}
\begin{algorithmic}[1]
    \State \textbf{Input:}~ \(\mathbf{p}^\dagger\),  \textbf{\textsf{rank}}  
      \State Initialize remaining capacity: \( \ell^{\text{rem}} \gets \ell \)
    \For{$i=1$ \textbf{to} $|U|$}
    \State \(u\leftarrow \textsf{\textbf{rank}}_i\)\Comment{\texttt{\color{magenta} Select AAM vehicle from the ranked list}}
    \State $\texttt{AAM\_vehicle\_allocated} = False $
    \While{$\texttt{AAM\_vehicle\_allocated} = False$}
        \State AAM vehicle \(u\) reports its \textbf{integral} allocation $\mathbf{\bar{x}}_{u}$ by solving \eqref{eq: UserOpt}
        \If {$x_{u,e} \leq \ell^{\text{rem}}_e$ for every \(e\in \tilde{\mathcal{E}}\)}
            % \State Allocate the resources demanded by AAM vehicle \(u\) and update available capacity of regions:  $\ell \leftarrow  \ell- \mathbf{x}$
            % \State Allocate  $\mathbf{x}_{u}$ to \(u\): $\bar{\mathbf{x}}_{u} \gets \mathbf{x}_{u}$ 
            \State Update remaining capacity: $\ell^{\text{rem}}_e \gets \ell^{\text{rem}}_e - \bar{x}_{u,e}, \quad \forall e \in \tilde{\mathcal{E}}$
        \State $\texttt{AAM\_vehicle\_allocated} = True $
        \Else
        \State Define contested set \(C \gets \{e \in \tilde{\mathcal{E}} \mid x_{u,e} > \ell^{\text{rem}}_e\}\) \Comment{\texttt{{\color{magenta}Identify contested good}}}
        % \State \texttt{Contested\_goods} = \{\(e\in\tilde{\mathcal{E}}\)|$x_{u,e}> \ell_{e}$\}\Comment{\texttt{{\color{magenta}Identify contested good}}}
            % \State For every \(e\in \texttt{Contested\_goods},\) $p_e \gets \inf$\Comment{\texttt{{\color{magenta}Remove contested good}}}
        \For{each \( e \in C \)}
            \State Update price: \( p_e \gets \infty \) \Comment{\texttt{{\color{magenta} Remove $e$ from further consideration}}}
        \EndFor
        \EndIf
    \EndWhile
    \EndFor
\State \textbf{Return} \( \bar{\mathbf{x}} \)
\end{algorithmic}   
    \label{algo:2}
\end{algorithm}
\begin{remark}\label{rem: AAM vehicle_Flight}
For Algorithms \ref{algo:1}-\ref{algo:2}, the SP only requires information on the resource demands of AAM vehicles, their feasible time trajectories, and their most preferred time trajectory. Importantly, the SP does \textbf{not} need any details regarding the private valuations of individual AAM vehicles. Moreover, each AAM vehicle does not have access to the demands or private valuations of other vehicles, ensuring that privacy is maintained throughout the process.
\end{remark}

\section{Drone Delivery in Toulouse: A Case Study}\label{sec: DroneDelivery}
In this section, we validate our proposed mechanism using a dataset of simulated package deliveries by Airbus, as shown in Fig. \ref{fig:cityToulouse}. 
 Particularly, we use a synthetic dataset generated by Airbus to simulate a drone-based package delivery scenario in Toulouse, France \cite{egorov2019encounter, chin2023protocol}. 

\textbf{Dataset Specification:} The data involves four warehouses located on the periphery of the city, which serve as hubs for UAV take-off and landing. Delivery requests are generated with spatial locations drawn uniformly across Toulouse and their temporal occurrences follow a Poisson process. 
Each request triggers a UAV to depart from the launchpad of a warehouse, deliver a package to the specified destination, and return to its origin. 
The UAV flight trajectory, including take-off, cruise, and landing phases, is generated by Airbus's high-fidelity trajectory simulator, ensuring that the simulated operations closely mimic real-world conditions. The data set includes data corresponding to $177$ UAV flights, which spans roughly 6000 seconds (\(100\) minutes). The average length of a UAV flight from a warehouse to the delivery location is approximately 300 seconds ($5$ minutes). 

\begin{figure}[h!]
    \centering
    \includegraphics[width=0.6\linewidth]{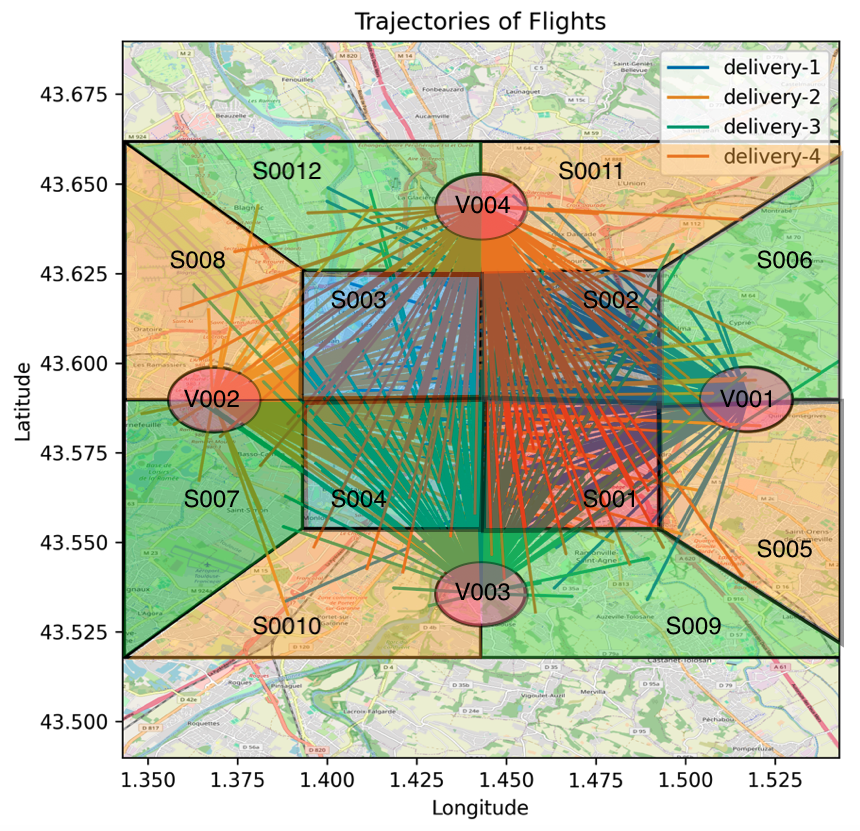}
    \caption{Division of Toulouse airspace into 12 cruising sectors (polygons) and 4 launch-pad sectors (circles). The lines show the trajectory of UAVs in the dataset. Labels indicate the sector (S\#) or vertiport (V\#).}
    % \Description{}s
    \label{fig:cityToulouse}
\end{figure}

\textbf{Airspace Specifications:} 
To implement our auction approach, we partition  Toulouse's airspace into $12$ ``cruising-altitude'' regions, along with $4$ warehouses, as shown in Fig. \ref{fig:cityToulouse}. 
We construct a time-extended graph (cf. Definition \ref{def: Time-extendedGraph}) with a total of \(T = 400\) time-steps, where each step of corresponds to 15 seconds. Based on the data we find that the minimum capacity needed to accommodate all requests from UAVs is 14 units in all 12 cruising regions of the airspace and 4 units for vertiport departure and arrival at the warehouse locations. Therefore, to make this problem interesting, we set capacity to \(50\%\) of this in all regions, unless otherwise specified.

\textbf{UAV Specifications:} Each UAV is either allocated a path on the time-extended graph or is ``rebased.'' For every UAV, the feasible set of paths on the time-extended graph includes its most desirable path, along with four alternative paths that each incur a one-time-step delay. If a UAV is rebased, then it converts its budget into outside options to be used again in the next auction window. Every rebased UAV requests access to the airspace from the start of the next auction window. 
Any UAV may be rebased at most twice, after which it is dropped and not considered in future auctions.

The private value generated by each UAV on its most desirable path is a uniformly random number between \(150\) and \(250\) units. The utility decreases by a factor of \(0.95\) for each time-step delay in departure. If a UAV is rebased, its utility decreases by a factor of \(0.5\). The utility derived by any UAV from dropping out is \(40\) units.

\textbf{Implementation Specifications:} In each auction window, the SP allocates an additional budget of artificial currency to each participating UAV, randomly sampled between 150 and 250 units. This amount is then added to the UAV's existing budget accumulated from past auctions.   
For our numerical study, we set the nominal tolerance values in Algorithm \ref{algo:1} as follows:  
\(\textsf{tol}_{\textsf{CE}}\) is set to \(0.1\%\) of its maximum possible value,  
\(\textsf{tol}_{\textsf{ICE}}\) to \(0.01\%\) of its maximum value, and  
\(\textsf{tol}_{\textsf{EAE}}\) to \(0.1\%\) of its maximum value.  
Additionally, we set \(\beta = 50\) and \(N=30\) in Algorithm \ref{algo:1}.
We implemented our approach in Python and ran the simulations on a laptop equipped with a 12th-gen Intel Core i7-1200H CPU (14 cores, 20 threads) and 32GB DDR4 RAM. The operating system used was Ubuntu 22.04. Our code is publicly available at \href{https://github.com/sastry-group/Mechanism-Design-for-AAM}{https://github.com/sastry-group/Mechanism-Design-for-AAM}.

\subsection{Qualitative Analysis}
In this subsection, we present the outcome of our receding-horizon auction approach that we conducted for a total of \(13\) (i.e, \(I = 13\)) auctions.
\textcolor{black}{
Before the start of every auction, the SP gathers the demand of AAM vehicles requesting access to the airspace.
The overall budget of any UAV includes the new air credits they received and any unused air credits from a previous round.} Additionally, we also compare our performance with two baselines. 

In Fig. \ref{fig:drop_v_auction}, we present the number of agents that were allocated, delayed, dropped, rebased-once and rebased-twice in different auctions. We see that the number of rebased and dropped agents increases as we proceed to later rounds. This is because the congestion builds up as we are operating in highly contested settings, as we set the maximum capacity of every region to be at \(50\%.\) In Fig. \ref{fig:MarketClearningPercent}, we present the market clearing error (MCE) after Algorithm \ref{algo:2} which captures the fraction of edges on the time-extended graphs which have positive price but for which the congestion is below the capacity. {This metric aligns with the third component of the competitive equilibrium definition in Def. \ref{def:CompetitiveEquilibrium}}, capturing a common economic notion that we should not charge a payment on a good that is below its capacity. We see that this is a very small number, in the range of 0-0.6\%, highlighting that our approach (Algorithm \ref{algo:1} + \ref{algo:2}) is not imposing prices on the uncontested goods.

\begin{figure}
    \centering

    \begin{subfigure}{0.49\textwidth}
    \centering
    \includegraphics[width=\linewidth]{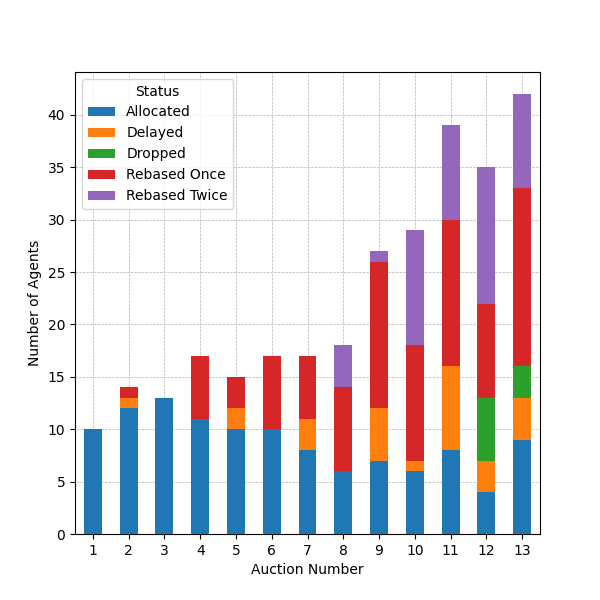}
    \caption{Allocations for Each Auction}
    \label{fig:drop_v_auction}
    \end{subfigure}
    \hfill
    \begin{subfigure}{0.49\textwidth}
        \centering
        \includegraphics[width=\linewidth]{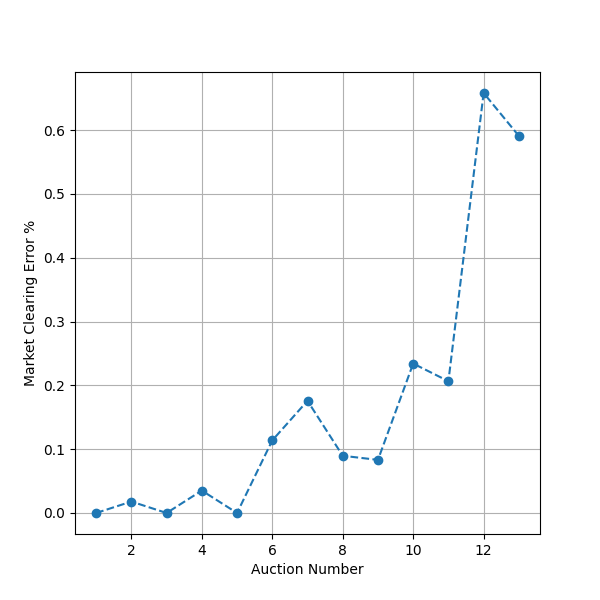}
        \caption{Market Clearing Error After Integral Allocation (percent)}
        \label{fig:MarketClearningPercent}
    \end{subfigure}

    \caption{ Properties of allocation finalized by our receding-horizon auction approach.}
    \label{fig:main}
\end{figure}

\begin{table}[h!]
\centering
\footnotesize
\caption{Comparison to baseline auction approaches under different capacities}
\renewcommand{\arraystretch}{1.0}
\setlength{\tabcolsep}{2.5pt} % Reduce column padding
\resizebox{\textwidth}{!}{ % Auto-scale to fit within margins
\begin{tabular}{|c|c|c|c|c|c|c|c|c|}
\hline
\textbf{Approach} & \textbf{Capacity} & \thead{\textbf{Num. Times} \\  \textbf{Rebased}} & \thead{\textbf{Num. Delayed} \\  \textbf{UAVs}}  & \thead{\textbf{Avg. Delay} \\ \textbf{(time steps)}} & \thead{\textbf{Num. Rebased} \\ \textbf{UAVs}} & \thead{\textbf{Avg. Times} \\ \textbf{Rebased}} & \thead{\textbf{Num. Never} \\ \textbf{Allocated}} \\ 
\hline
\multirow{2}{*}{\textbf{Budget-based}} & 60\% & 20 & 12 & \textcolor{ForestGreen}{\textbf{1.83}} & 17 & 1.18 & \textcolor{ForestGreen}{\textbf{0}} \\  
& 50\% & 148 & \textcolor{ForestGreen}{\textbf{22}} & \textcolor{ForestGreen}{\textbf{1.22}} & \textcolor{ForestGreen}{\textbf{84}} & 1.76 & 43 \\ 
\hline
\multirow{2}{*}{\textbf{Profit-based}} & 60\% & 59 & 17 & 2.41 & 32 & 1.84 & 24 \\  
& 50\% & 164 & \textcolor{ForestGreen}{\textbf{22}} & 2.86 & 87 & 1.89 & 70 \\ 
\hline
\multirow{2}{*}{\textbf{Ours}} & 60\% & \textcolor{ForestGreen}{\textbf{12}} & \textcolor{ForestGreen}{\textbf{10}} & 2.3 & \textcolor{ForestGreen}{\textbf{12}} & \textcolor{ForestGreen}{\textbf{1}} & \textcolor{ForestGreen}{\textbf{0}} \\  
& 50\% & \textcolor{ForestGreen}{\textbf{143}} & 27 & 1.48 & 96 & \textcolor{ForestGreen}{\textbf{1.49}} & \textcolor{ForestGreen}{\textbf{35}} \\  
\hline
\end{tabular}
}
\label{tab:merged_comparisons}
\end{table}

In Table \ref{tab:merged_comparisons}, we compare our approach to two baselines based on the ascending clock auction \cite{cramton2006combinatorial}. Both comparisons run simultaneous ascending clock auctions using $\beta$ as the price increment. Agents are allowed to perform price discovery by only bidding on their most beneficial request instead of all goods across their preferred and delayed requests. Due to (\ref{eq: SelectingRoute}), we can assume agents will always bid on a request or the outside option and are therefore always active. The procedure is outlined in Algorithm \ref{algo:comparisons}. In the "Budget-based" comparison, agents solve their IOP to determine their bid, and in the "Profit-based" they determine their bid based on their profit (value - price). In each round, all agents bid, and prices are raised on contested goods until no more goods are contested.

\begin{algorithm}
    \caption{Ascending Clock Auction Comparisons}
    \begin{algorithmic}[1]
    \State \textbf{Initialize} prices: \( \mathbf{p} \gets \mathbf{0} \)
    \Repeat
        \State $\mathbf{\hat{x}}_{u} \gets$ AAM vehicle \(u\) integral output from \eqref{eq: UserOpt} OR \Comment{\texttt{{\color{magenta}Budget-based Bid}}}
        \State $\mathbf{\hat{x}}_{u} \gets \text{argmax}_{\mathbf{\bar{x}}_u} \sum_{s\in R_u}v_{u,s} x_{u, e^\ast(s)} + v_{u,o}x_{u,o} - \textbf{p}^T\mathbf{x}_u - p_ox_{u,o} \quad \text{st.} \ (\ref{eq: SelectingRoute}) - (\ref{eq: IntegralConstraints}) \quad \forall u \in U$ \Comment{\texttt{{\color{magenta}Profit-based Bid}}}
        \State $\mathcal{C} = \{ e \in \
        \mathcal{\tilde{E}} \ | \ \ell_e < \Sigma_{u \in U} \hat{x}_{u,e} \}$
        \State $p_e \gets p_e + \beta \quad \forall e \in \mathcal{C}$ \Comment{\texttt{{\color{magenta}Increase Price for Overcapacitated Goods}}}
    \Until{$\mathcal{C} = \emptyset$}
    \State \textbf{Return} \( \hat{\mathbf{x}}_{u \in U} \)
    \end{algorithmic}   
    \label{algo:comparisons}
\end{algorithm}

We compare these approaches using the Toulouse example at a 50\% capacity level and also consider a slightly less constrained case (60\% capacity). The set of goods remains the same across all approaches. Our evaluation includes the number of agents that are never allocated (dropped agents), the number of delayed agents, the average delay duration for delayed agents, the total number of times agents are rebased (including agents that are rebased-once and rebased-twice), the number of agents that are rebased at least once, and the average number of times agents are rebased. Lower values are preferable across all metrics, and the lowest value for each case is shown in green and bolded.

For the 60\% capacity case, both our approach and the Budget-based comparison allocate all agents, with our approach additionally resulting in fewer delayed and rebased aircraft. In the more constrained 50\% case, our approach significantly reduces the number of unallocated agents. While the Budget-based approach results in fewer delayed aircraft and a lower number of rebased agents, minimizing the number of unallocated agents is the more desirable outcome. The Profit-based approach performs worse in both cases across almost every metric, further highlighting the benefits of using artificial currency. We attribute this to the high cumulative costs incurred when agents must bid on multiple goods in this setting. These results show that, for comparable betas,  our approach can better handle rising congestion due to the lower numbers of unallocated and rebased agents.

\subsection{Sensitivity Analysis}

\begin{figure}
    \centering
    \begin{subfigure}{0.45\textwidth}
        \centering
        \includegraphics[width=\linewidth]{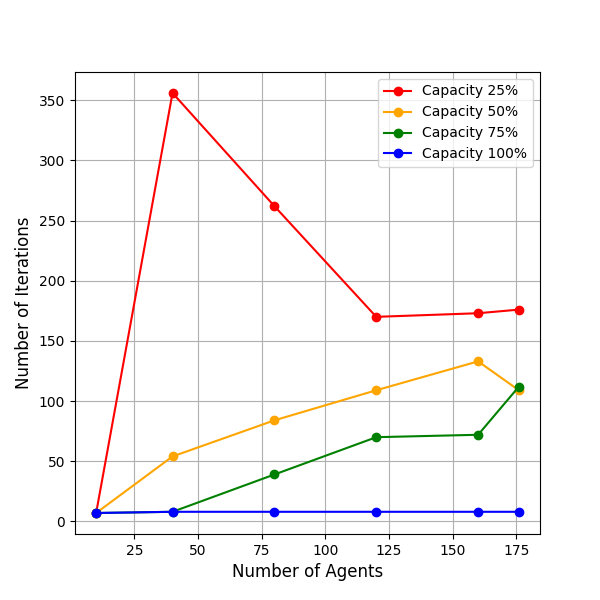}
        \caption{Iterations vs. Number of UAVs}
        \label{fig:iter_v_agents}
    \end{subfigure}
    \hfill
    \begin{subfigure}{0.45\textwidth}
    \centering
    \includegraphics[width=\linewidth]{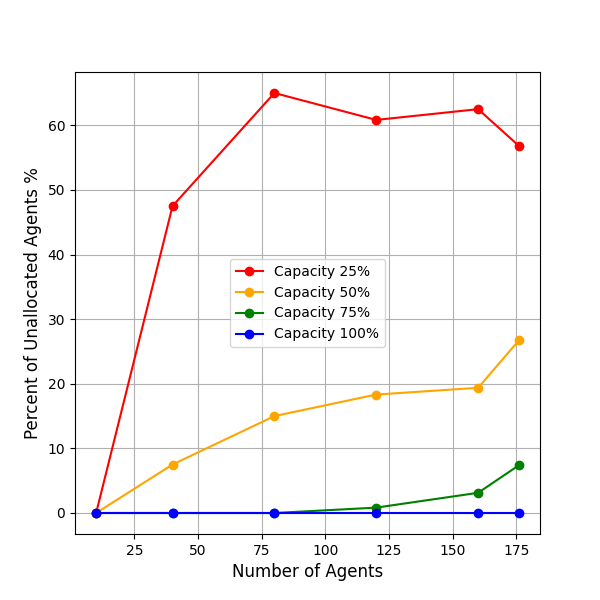}
    \caption{Percent of Unallocated UAVs vs. Number of UAVs}
    \label{fig:Percent_Unallocated}
    \end{subfigure}
    
    \caption{Variation of the number of iterations and percent of unallocated agents with respect to the number of agents as we vary capacity constraints on the resources.}
    \label{fig:performance_num_agents} 
\end{figure}

\begin{figure}
    \centering
    \begin{subfigure}{0.45\textwidth}
        \centering
        \includegraphics[width=\linewidth]{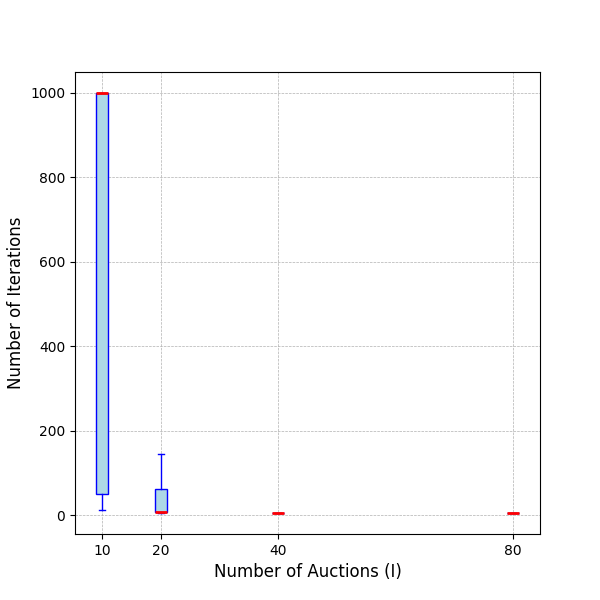}
        \caption{Iterations vs. Number of Auctions (I)}
        \label{fig:num_auction}
    \end{subfigure}
    \hfill     
    \begin{subfigure}{0.45\textwidth}
        \centering
        \includegraphics[width=\linewidth]{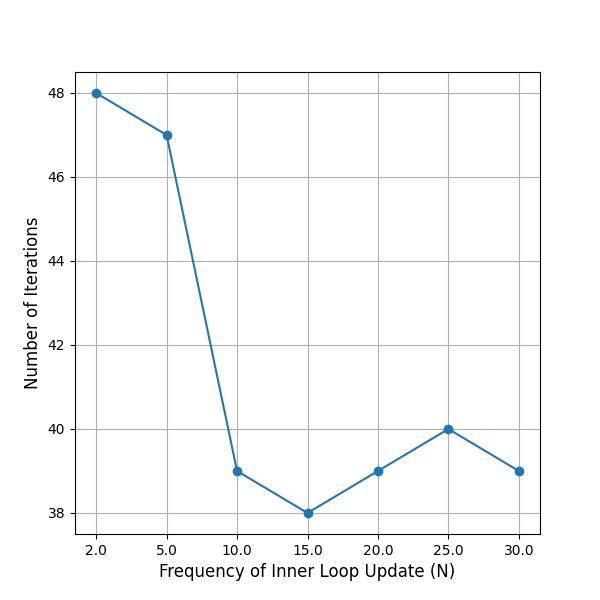}
        \caption{Iterations vs. Inner Loop Update Parameter in Algorithm \ref{algo:1} (N)}
        \label{fig:lamda_study}
    \end{subfigure}

    \begin{subfigure}{0.4\textwidth}
        \centering
        \includegraphics[width=\linewidth]{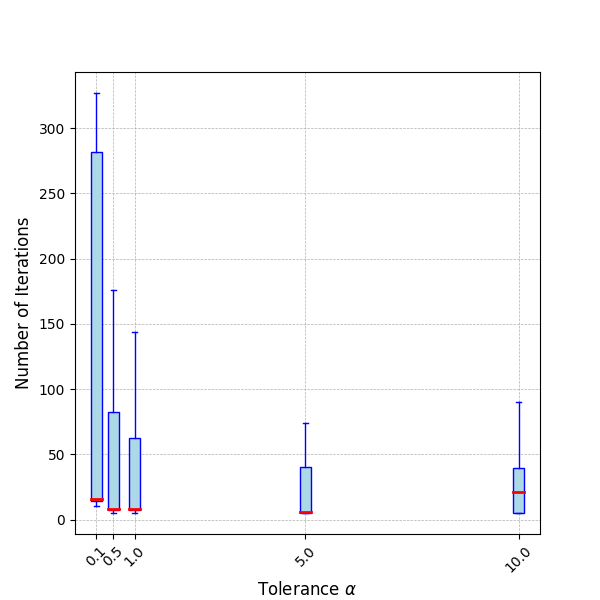}
        \caption{Iterations vs. Tolerance in Algorithm \ref{algo:1}}
        \label{fig:iter_vs_tol_whisker}
    \end{subfigure}

    \caption{ Variation in number of iterations of Algorithm \ref{algo:1} with different algorithmic parameters}
    \label{fig:performance}
\end{figure}

In this subsection, we conduct numerical studies to evaluate the impact of key design parameters on the performance of our algorithmic approach.

In Fig. \ref{fig:iter_v_agents}, we present the variation in the number of iterations of Algorithm \ref{algo:1} with respect to different numbers of agents under various capacity constraints. This scenario can be understood as a setting where there is only one auction window in which all agents simultaneously request their desired goods on the time-extended graph.
We observe that an increase in the number of agents leads to a higher number of contested goods, which requires more iterations of Algorithm \ref{algo:1} to compute a fractional competitive equilibrium. At 100\% capacity, there are no contested resources, and the algorithm requires only 8 iterations regardless of the number of agents. As the capacity decreases, the number of iterations increases due to a higher number of contested goods. 
% For 50\% and 25\% capacity, the iteration trend does not follow a strictly linear pattern.
In Fig. \ref{fig:Percent_Unallocated}, we analyze the variation in the percentage of unallocated agents with respect to different numbers of agents under various capacity constraints. We observe that when the resource capacity is 100\%, all UAVs receive their desired paths. However, as the capacity decreases, a greater number of agents remain unallocated. Furthermore, for any given capacity level, the percentage of unallocated agents increases as more agents participate in the auction.

In Fig. \ref{fig:performance}, we study the variation in the number of iterations of Algorithm \ref{algo:1} with respect to different market parameters.
First, in Fig. \ref{fig:num_auction}, we examine how the number of iterations of Algorithm \ref{algo:1} changes as we vary the number of auctions in our receding horizon approach. As the number of auctions increases, the number of UAVs participating in each auction decreases, resulting in fewer contested goods. Consequently, Algorithm \ref{algo:1} converges more quickly.
Next, in Fig. \ref{fig:lamda_study}, we evaluate the impact of the number of inner loop updates (parameter \(N\) in Algorithm \ref{algo:1}). Recall, that the goal of Algorithm \ref{algo:1} is to emulate fixed point iteration (\ref{eq: FixedPointError}). Towards this goal, the role of inner loop updates in Algorithm \ref{algo:1} is to estimate $\lambda^{\dagger}(\omega^{(k)})$ for every update of $\omega^{(k)}$ in the outer loop.  The results show that if \(N\) is lower, the number of iterations of Algorithm \ref{algo:1} is higher, as the inner loop cannot estimate \(\lambda^\dagger(\omega^{(k)})\) accurately. Consequently, increasing \(N\) decreases the number of iterations up to a point. However, if we continue to increase \(N\) past this point then the additional steps in the inner loop do not help in the convergence of fixed point iteration (\ref{eq: FixedPointError}) and instead cause the iterations of Algorithm \ref{algo:1} to start increasing again. 
Finally, in Fig. \ref{fig:iter_vs_tol_whisker}, we analyze how the number of iterations varies with different convergence tolerances. For clear presentation, we introduced the variable \(\alpha\), which is a multiplier for the nominal value of tolerances (i.e., \(\textsf{tol}_{\textsf{CE}}, \textsf{tol}_{\textsf{ICE}}, \textsf{tol}_{\textsf{EAE}}\)) described in our setup above and we study the variation in the number of iterations with respect of \(\alpha.\) As expected, stricter tolerances (i.e., lower \(\alpha\)) generally increase the number of iterations, since the algorithm must satisfy more stringent stopping conditions.

\label{sec: SupplementalDrone}

\section{Limitations}\label{sec: limitations}

Here, we discuss some limitations of the current modeling framework. Addressing these limitations presents interesting directions for future research.
\begin{itemize}
    \item[(i)] We assume that the AAM vehicles are not malicious and they act according to Algorithms \ref{algo:1} and \ref{algo:2}. One way to implement our mechanisms is that before takeoff, each AAM vehicle interacts with the SP through a ``proxy agent'', which is an autonomous entity that bids on its behalf based on Algorithm \ref{algo:1} and \ref{algo:2}. The use of proxy agents is a popular method for implementing iterative auctions \cite{cramton2006combinatorial}. Developing suitable auditing mechanisms to ensure that agents act honestly is an interesting open problem. 
    \item[(ii)] We consider that the communication between the service provider and each UAV is delay-free and noiseless. Once the vehicle has taken off, its route is fixed, and no further in-flight communication is required.
    \item[(iii)] We assume that a single service provider controls the airspace. An interesting direction for future research is to extend our approach when the airspace is managed by multiple service providers. 
    \item[(iv)] In the current framework, each UAV is allocated to a feasible trajectory before departing and they do not change the trajectory during flight. 
    \item[(v)] We assume that adequate infrastructure and operational measures are in place to handle emergencies.
    \item[(vi)] We consider that the SP randomly assigns budget to all AAM vehicles, but developing better budget allocation schemes that result in socially optimal outcome is an interesting direction of future research. 
\end{itemize}

% \section{Vertiport Reservation Mechanism in Northern California}\label{sec: SanFranciscoCase}
% \input{Sections/Boards}

\section{Conclusions}\label{sec: Conclusion}
In this work, we introduce a novel mechanism that enables service providers to allocate on-demand requests from Advanced Air Mobility (AAM) vehicles, each with heterogeneous private valuations, to a capacity-constrained airspace. We evaluated the effectiveness of our approach using a physically accurate urban air delivery dataset.
This is the first work in the AAM literature to allocate constrained airspace resources to dynamically arriving AAM vehicles without requiring knowledge of their private valuations. Core to our approach is an artificial currency-based auction mechanism that is implemented in a receding-horizon manner, wherein for each auction, we use a distributed iterative algorithm that accounts for individual agent preferences while ensuring system efficiency and safety.

\section{Acknowledgments}
Chinmay Maheshwari, Pan-Yang Su, and Shankar Sastry acknowledge support from NSF Collaborative Research: Transferable, Hierarchical, Expressive, Optimal, Robust, Interpretable NETworks (THEORINET) Award No. DMS 2031899. Victoria Tuck and Shankar Sastry acknowledge support from Provably Correct Design of Adaptive Hybrid Neuro-Symbolic Cyber-Physical Systems, Defense Advanced Research Projects Agency award number FA8750-23-C-0080. Maria G. Mendoza and Shankar Sastry acknowledge support from NASA, Clean Sheet Airspace Operating Design project award number MFRA2018-S-0471. Victor L Qin was supported by the National Science Foundation Graduate Research Fellowship Program under Grant No. 2141064. Any opinions, findings, conclusions, or recommendations expressed in this material are those of the author(s) and do not necessarily reflect the views of the National Science Foundation.

\bibliography{refs}
\bibliographystyle{plainnat}
\newpage
\appendix
\section{A Simple Example}\label{sec:section21}
In this section, we explain the time-extended graph (Definition \ref{def: Time-extendedGraph}) along with constraints \eqref{eq: SelectingRoute}-\eqref{eq: PathConstraints} through a simple example comprising of 3 regions, denoted by  \(\{A,B,C\}\). 
\begin{figure}
    \centering
    \includegraphics[width=\textwidth]{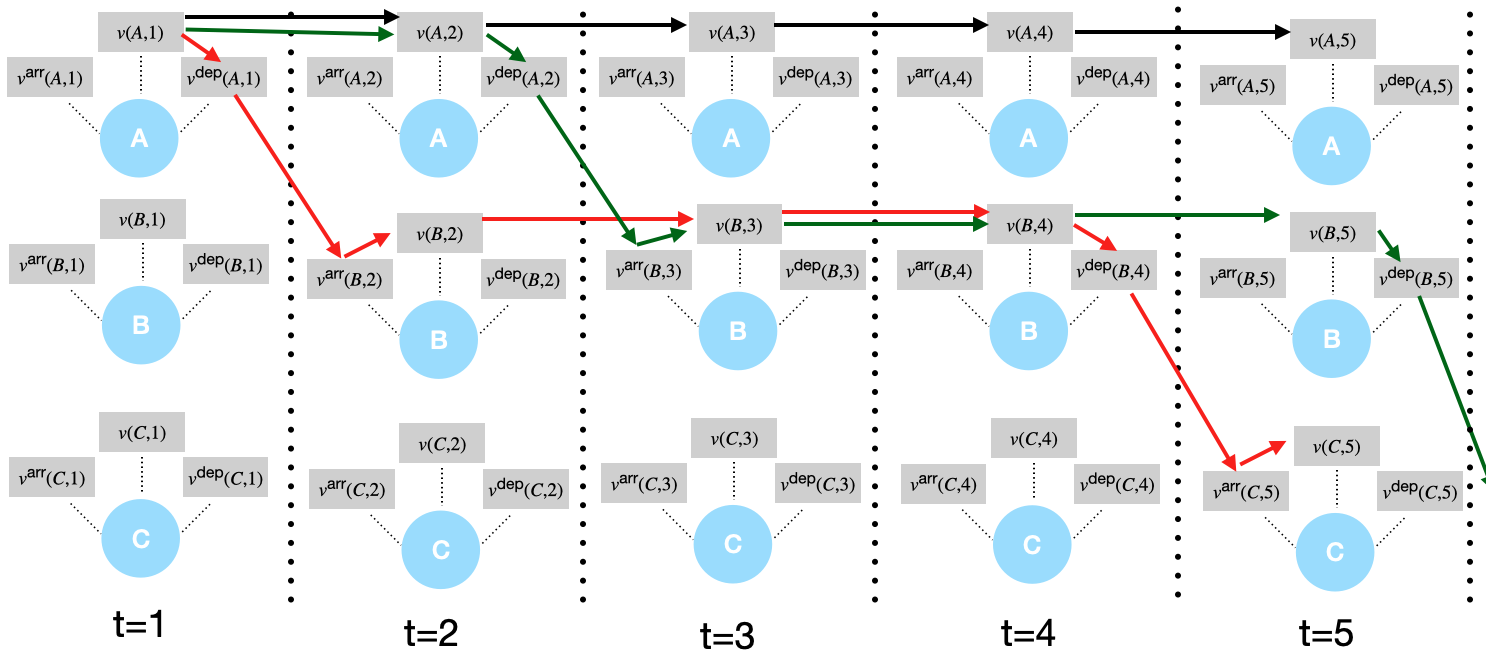}
    \caption{Time Extended Graph: From left to right, we show a sequence of time steps and different color-coded trajectories that an AAM vehicle can request. The red trajectory shows an AAM vehicle traveling from region A to C while transiting from region B. The green trajectory represents the same trajectory as the red path but is delayed by one unit of time. The black trajectory denotes an option where the AAM vehicle stays parked at the origin region. To simplify the visualization, we have not shown all possible edges on this time-extended graph.}   
    % \Description{}
    \label{fig:sector-price}
\end{figure}

\paragraph{Time-extended graph}
The time-extended graph  (for \(T=5\)) corresponding to our scenario is shown in Figure \ref{fig:sector-price}. 
In the time-extended graph $\tilde{\mc{G}} = (\tilde{\mc{R}}, \tilde{\mc{E}})$,  at every time \(t\) each region $r$ is replicated into three regions $t:v(r,t), v^\arr(r,t), v^\dep(r,t)$. For conciseness, we will only discuss one edge corresponding to each of the four types \(\tilde{\mathcal{E}}^{(1)}, \tilde{\mathcal{E}}^{(2)}, \tilde{\mathcal{E}}^{(3)}, \tilde{\mathcal{E}}^{(4)}\). The (red) edge \((v^{\arr}(B,2), v(B,2))\) is an edge of type \(\tilde{\mathcal{E}}^{(1)}\). The (red) edge \((v(A,1), v^{\dep}(A,1))\) is an edge of type \(\tilde{\mathcal{E}}^{(2)}\). The (black) edge \((v(A,1), v(A,2))\) is an edge of type \(\tilde{\mathcal{E}}^{(3)}\). The (green) edge \((v^{\dep}(A,2), v^{\arr}(B,3))\) is an edge of type \(\tilde{\mathcal{E}}^{(4)}\).

\paragraph{Constraint \eqref{eq: SelectingRoute}-\eqref{eq: PathConstraints}}
Here, we illustrate the constraints \eqref{eq: SelectingRoute}-\eqref{eq: PathConstraints} through an example. Consider an AAM vehicle \(u\) that wants to travel from region A to region C. Suppose the menu of that AAM vehicle is comprised of the routes \(s_1, s_2, s_3\), as described below:  
\begin{align*}
    s_1 &= \{(v(A,1),v(A,2)), (v(A,2),v(A,3)), (v(A,3),v(A,4)), (v(A,4),v(A,5))\}\quad  \texttt{Black-path in Figure \ref{fig:sector-price}} \\ 
    s_2 &= \{ (v(A,1), v^{\dep}(A,1)), (v^\dep(A,1), v^{\arr}(B,2)), (v^\arr(B,2), v(B,2)), (v(B,2), v(B,3)), (v(B,3), v(B,4)), \\ &\hspace{4cm} (v(B,4), v^\dep(B,4)), ( v^\dep(B,4), v^{\arr}(C,5)), \\ &\hspace{4cm} (v^{\arr}(C,5), v(C,5))\} \quad \texttt{{\color{red}red-path} in Figure \ref{fig:sector-price}} \\ 
    s_3 &= \{ (v(A,1),v(A,2)), (v(A,2), v^{\dep}(A,2)), (v^\dep(A,2), v^{\arr}(B,3)), (v^\arr(B,3), v(B,3)), \\ &\hspace{4cm} (v(B,3), v(B,4)), (v(B,4), v(B,5)), (v(B,5), v^\dep(B,5)), \\ &\hspace{4cm} (v^\dep(B,5), v^{\arr}(C,6))\} \quad \texttt{{\color{ForestGreen}green-path} in Figure \ref{fig:sector-price}}
\end{align*}
Here \(e^\ast(s_1)  = (v(A,1), v(A,2))\), \(e^\ast(s_2) = (v(A,1), v(A,1)^{\dep})\), \(e^\ast(s_3) = (v(A,2), v^{\dep}(A,2))\). The menu \(M_u\) of AAM vehicle \(u\) is given by \(
    M_u = \{s_1, s_2, s_3, \varnothing\}.\)

Consequently, the constraint \eqref{eq: SelectingRoute} for this AAM vehicle is given by \(
    x_{u, e^\ast(s_1)}+x_{u, e^\ast(s_2)}+x_{u, e^\ast(s_3)}  + x_{u,\varnothing} = 1.
\)
Additionally, the constraint \eqref{eq: PathConstraints} for this AAM vehicle contains two types of constraints: (1) the flow balance constraints: 
\begin{align*}
    & 
    x_{u,(v(A,1), v(A,2))} = x_{u, (v(A,2), v^\dep(A,2))} + x_{u, (v(A,2), v(A,3))} \\
&x_{u, (v(A,2), v(A,3))} = x_{u, (v(A,3), v(A,4))} \\
&x_{u, (v(A,3), v(A,4))} = x_{u, (v(A,4), v(A,5))}
\\
&x_{u, (v(A,2), v^\dep(A,2))} = x_{u,(v^\dep(A,2), v^\arr(B,3)} \\
& x_{u,(v^\dep(A,2), v^\arr(B,3)} = x_{u, (v^\arr(B,3), v(B,3))} \\ 
&  x_{u, (v^\arr(B,3), v(B,3))} + x_{u,(v(B,2), v(B,3))}  = x_{u, (v(B,3), v(B,4))}  \\ 
&x_{u, (v(B,3), v(B,4))} = x_{u,(v(B,4),v(B,5))} + x_{u,(v(B,4), v^\dep(B,4))} \\ 
& x_{u,(v(B,4),v(B,5))}  = x_{u,(v(B,2), v^\dep(B,5))} 
\\ 
& x_{u,(v(B,4), v^\dep(B,4))}  = x_{u,(v^\dep(B,4), v^\arr(C,5))} \\ 
& x_{u,(v^\dep(B,4), v^\arr(C,5))}  = x_{u,(v^\arr(C,5), v(C,5))} \\ 
&x_{u,(v^\dep(A,1), v^\arr(B,2))}  = x_{u,(v^\arr(B,2), v(B,2))} \\ 
&x_{u,(v^\arr(B,2), v(B,2))} = x_{u,(v(B,2), v(B,3))}.
\end{align*}
and (2) additional constraints:
\begin{align*}
    &x_{u,(v(A,1),v^\dep(A,1))} = x_{u,(v^\dep(B,4), v^\arr(C,5))} \\ 
    &x_{u,(v(A,2), v^\dep(A,2))} = x_{u,(v^\dep(B,5), v^\arr(C,6))} .
\end{align*}
 These constraints ensure that the path flows allocated on the departing edge, as per \eqref{eq: SelectingRoute}, result in unique edge flows on the entire network.

\section{Proof of Theoretical Results}\label{sec: ProofsAll}
\subsection{Proof of Proposition \ref{prop: MarketClearingExists}} \label{sec:Proposition33}
    Before presenting the proof, let us recall some important mathematical definitions and results which are crucial for the proof. First, we recall the definition of upper semicontinuous and lower semicontinuous correspondences. 
\begin{definition} A correspondence \(f:X\rightrightarrows Y\) is \emph{upper semicontinuous} if for every sequence \(x_n\in X\) (with limit \(x\)) and the sequence \(y_n\in f(x_n)\) which has a limit,  then there exists \(y\in f(x)\) such that \(y=\lim_n y_n\). 
\end{definition}
\begin{definition} A correspondence \(f:X\rightrightarrows Y\) is \emph{lower semicontinuous} if for every sequence \(x_n\in X\) (with limit \(x\)) and \(y\in f(x)\), then there exists a convergent sequence \(y_n \in f(x_n)\) with limit \(y\).
\end{definition}
Next, we recall the Kakutani fixed point theorem. 
\begin{theorem}[Kakutani Fixed Point Theorem]\label{thm: Kakutani}
    Suppose \(X\) is a non-empty, convex, and compact subset of  \(\mathbb{R}^n\) and \(f: X\rightrightarrows X\) is a non-empty, compact-valued, convex-valued, and upper semicontinuous correspondence. Then \(f\) has a fixed point. 
\end{theorem}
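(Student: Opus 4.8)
The plan is to deduce Kakutani's theorem from Brouwer's fixed point theorem (for continuous single-valued self-maps of a compact convex set) via a simplicial approximation of the correspondence $f$. The idea is to replace $f$ by a sequence of continuous single-valued maps $g_k : X \to X$ whose graphs approximate that of $f$, apply Brouwer to obtain fixed points $x_k = g_k(x_k)$, and then pass to the limit, using upper semicontinuity together with the convex-valuedness of $f$ to show the limit point is a genuine fixed point of $f$.

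First I would reduce to the case where $X$ is a simplex. Since $X$ is compact and convex in $\mathbb{R}^n$, enclose it in a (large) simplex $S \supseteq X$ and let $\pi : S \to X$ be the nearest-point projection, which is well-defined and continuous because $X$ is closed and convex. The composite correspondence $\tilde{f} := f \circ \pi : S \rightrightarrows X \subseteq S$ inherits nonemptiness, compact- and convex-valuedness, and upper semicontinuity (for the last point, if $s_m \to s$ and $y_m \in f(\pi(s_m))$ with $y_m \to y$, then $\pi(s_m) \to \pi(s)$ and the u.s.c. of $f$ gives $y \in f(\pi(s))$), and it maps $S$ into $X$. Any fixed point $s \in \tilde{f}(s)$ must then lie in $X$, where $\pi(s) = s$, so it is a fixed point of $f$; hence it suffices to prove the theorem on the simplex $S$.

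Next, for each $k$ I would take the $k$-th barycentric subdivision of $S$, whose mesh tends to $0$. At each vertex $v$ of the subdivision choose a point $y_v \in f(v)$ (possible by nonemptiness), and let $g_k : S \to S$ be the map equal to $y_v$ on each vertex $v$ and extended affinely over each sub-simplex. Since $S$ is convex and each $y_v \in S$, the map $g_k$ is continuous and sends $S$ into $S$, so Brouwer yields a fixed point $x_k$. Writing $x_k$ in the barycentric coordinates of a sub-simplex containing it, $x_k = \sum_{i=0}^{n} \lambda_i^k v_i^k$ with $\lambda_i^k \geq 0$ and $\sum_i \lambda_i^k = 1$, the fixed-point equation and affinity of $g_k$ give $x_k = \sum_i \lambda_i^k y_{v_i^k}$ with $y_{v_i^k} \in f(v_i^k)$.

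Finally, by compactness of $S$ and of the weight simplex $\{\lambda : \lambda_i \geq 0, \ \sum_i \lambda_i = 1\}$, I would pass to a subsequence along which $x_k \to x^*$, $\lambda_i^k \to \lambda_i$, and $y_{v_i^k} \to y_i$ for each $i$. Because the mesh tends to $0$, every vertex $v_i^k$ also converges to $x^*$; the upper semicontinuity of $f$, in the sequential/closed-graph form stated in the excerpt, then forces $y_i \in f(x^*)$ for each $i$. Passing to the limit in the fixed-point equation gives $x^* = \sum_i \lambda_i y_i$ with $\lambda_i \geq 0$ and $\sum_i \lambda_i = 1$, so convexity of $f(x^*)$ yields $x^* \in f(x^*)$. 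The main obstacle is exactly this limiting step: one must simultaneously control the convergence of the vertices, the barycentric weights, and the selected image points, and it is the interplay of upper semicontinuity (to keep each limiting selection inside $f(x^*)$) with convex-valuedness (to close the convex combination) that makes the argument go through. A secondary technical point, worth checking carefully, is that the projection composition $\tilde{f}$ genuinely preserves all the hypotheses, so that the reduction to a simplex is legitimate.
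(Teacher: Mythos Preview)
Your proposal is the classical reduction of Kakutani's theorem to Brouwer's theorem via simplicial approximation, and the outline is correct: the reduction to a simplex by nearest-point projection, the construction of piecewise-affine selections $g_k$, the Brouwer fixed points, and the compactness/limit argument using upper semicontinuity plus convex-valuedness are all standard and go through as you describe.

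However, the paper does not actually prove this theorem. It is stated in the appendix purely as a recalled background result (``Next, we recall the Kakutani fixed point theorem''), with no proof given, and is then invoked as a black box in the proof of Proposition~\ref{prop: MarketClearingExists}. So there is nothing to compare your argument against: the paper treats Kakutani's theorem as an external tool, whereas you have supplied an independent (and essentially correct) proof sketch of it.
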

Finally, we recall Berge's maximum theorem. 
\begin{theorem}[Berge's Maximum Theorem]\label{thm: BergeMaximumTheorem}
    Consider the optimization problem \(
        \max_{x\in A(\theta)} F(x,\theta)\). 
    Let \(X(\theta)\) be the set of solutions of the preceding problem. If \(F\) is continuous in \((x,\theta)\) and \(\theta\rightrightarrows A(\theta)\) is a non-empty, compact-valued, and continuous correspondence, then \(X(\theta)\) is a non-empty, compact-valued, and upper semicontinous correspondence.  
\end{theorem}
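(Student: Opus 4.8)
The plan is to establish the three claimed properties of the solution correspondence $X(\theta)$ in turn, relying only on the Weierstrass extreme value theorem and the two sequential definitions of semicontinuity stated just above. Throughout I write $V(\theta) = \max_{x \in A(\theta)} F(x,\theta)$ for the value of the problem. Note that the statement asks only about $X(\theta)$ (not about continuity of $V$), and I would exploit this to give a direct argument that never needs $V$ to be proved continuous as a separate lemma.

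First I would fix $\theta$ and argue nonemptiness and compactness of $X(\theta)$ pointwise. Since $A(\theta)$ is nonempty and compact and $F(\cdot,\theta)$ is continuous, the extreme value theorem guarantees the maximum is attained, so $X(\theta) \neq \emptyset$ and $V(\theta)$ is well defined. Moreover $X(\theta) = \{x \in A(\theta) : F(x,\theta) = V(\theta)\}$ is the intersection of $A(\theta)$ with the preimage of the single point $V(\theta)$ under the continuous map $F(\cdot,\theta)$; this preimage is closed, so $X(\theta)$ is a closed subset of the compact set $A(\theta)$ and hence compact.

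The substance is upper semicontinuity of $X$, which I would verify directly from the paper's sequential definition. Take a sequence $\theta_n \to \theta$ and maximizers $x_n \in X(\theta_n)$ that converge, say $x_n \to x$; the goal is $x \in X(\theta)$. Since $x_n \in A(\theta_n)$ and $A$ is upper semicontinuous, the convergence $x_n \to x$ forces $x \in A(\theta)$, so $x$ is feasible for the limiting problem. To see that $x$ is optimal, fix any competitor $y \in A(\theta)$; by lower semicontinuity of $A$ there exist $y_n \in A(\theta_n)$ with $y_n \to y$. Optimality of $x_n$ for $\theta_n$ gives $F(x_n,\theta_n) \ge F(y_n,\theta_n)$ for every $n$, and passing to the limit using joint continuity of $F$ yields $F(x,\theta) \ge F(y,\theta)$. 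As $y \in A(\theta)$ was arbitrary, $x$ maximizes $F(\cdot,\theta)$ over $A(\theta)$, i.e. $x \in X(\theta)$, which is exactly the conclusion the sequential definition of upper semicontinuity demands.

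The delicate point, and where I would be most careful, is in orchestrating the two directions of continuity of $A$: upper semicontinuity keeps the limit of maximizers feasible ($x \in A(\theta)$), while lower semicontinuity transports an arbitrary competitor $y$ of the limit problem back into the perturbed feasible sets so that optimality of $x_n$ can be invoked. A helpful simplification is that the paper's sequential notion of upper semicontinuity only concerns the case in which the selected maximizers already converge, so I do not need a separate boundedness or subsequence-extraction step to manufacture a convergent subsequence of $(x_n)$; I simply verify the closed-graph-type conclusion under the standing hypothesis $x_n \to x$. Compact-valuedness of $A$ ensures the feasible sets involved are well behaved, and continuity of $F$ is precisely what lets the inequality $F(x_n,\theta_n) \ge F(y_n,\theta_n)$ survive the passage to the limit.
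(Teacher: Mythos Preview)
Your proof is correct and follows the standard route to Berge's theorem: Weierstrass for nonemptiness, closedness of the level set inside a compact for compact-valuedness, and then the two-sided use of continuity of $A$ (upper semicontinuity to keep the limit feasible, lower semicontinuity to transport competitors) together with joint continuity of $F$ to pass the optimality inequality to the limit.

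However, there is nothing to compare against: the paper does not prove this statement. It is merely recalled as a classical result (``Finally, we recall Berge's maximum theorem'') and then invoked as a black box inside the proof of Proposition~\ref{prop: MarketClearingExists} to conclude that the correspondences $\tilde{\mathbf{p}}(\mathbf{x})$ and $\tilde{\mathbf{x}}_u(\mathbf{p})$ are compact-valued and upper semicontinuous. So while your argument is sound, the paper offers no proof of its own for you to be measured against.
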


\begin{proof}[Proof of Proposition \ref{prop: MarketClearingExists}]
The proof builds on a result about the existence of a competitive equilibrium in Fisher markets with auxiliary inequality constraints \cite{jalota2023fisher}. Particularly, our proof accounts for auxiliary \emph{equality} constraints (resulting due to \eqref{eq: SelectingRoute}-\eqref{eq: PathConstraints}). 

In this result, we consider a relaxation of \eqref{eq: UserOpt}, by converting the integrality constraint to the positivity constraint.
Consider the relaxed individual optimization problem of every agent stated below:
\begin{subequations}\label{eq: UserOptProof}
\begin{align}
\max_{\bar{\mathbf{x}}_u} &\quad f_u(\bar{\mathbf{x}}_u) \label{eq: Obj_Relaxed}\\ 
    \text{s.t.} & \quad \mathbf{p}^{\top}\mathbf{x}_u + p_{\outsideOpt}x_{u,\outsideOpt} = w_u 
    \label{eq: BudgetRelaxed}\\ 
{} & \quad \tilde{\mathbf{a}}_u^\top \mathbf{x}_u + x_{u,\varnothing}= 1
\label{eq: ConstraintsRelaxed_Ones}
\\ 
{} & \quad \tilde{\mathbf{A}}_u \mathbf{x}_u = \mathbf{0}
\label{eq: ConstraintsRelaxedTwo}
\\
    {}& \quad   x_{u,\outsideOpt}\geq 0, x_{u,\dropOut}\geq 0, x_{u,e} \geq 0 \quad \forall e\in\tilde{\mathcal{E}}\label{eq: IntegralRelaxed}.
\end{align}
\end{subequations}

To prove the existence result, we scale the problem such that the total budget of all agents is \(1\) and the capacity of each good is \(1\). 
To do this, for every \(e\in\tilde{\mathcal{E}}, u\in U\), we scale any allocation \(x_{u,e}\) to \(x_{u,e}/\ell_e\), scale \(\tilde{\mathbf{a}}_{u,e}\) to \(\tilde{\mathbf{a}}_{u,e}\cdot\ell_e\), scale \(\tilde{\mathbf{A}}_u[:,e]\) to \(\tilde{\mathbf{A}}_u[:,e]\cdot \ell_e\), \(p_e\) to \(p_e\ell_e/W\), and \(w_u\) to \(w_u/W\), where \(W = \sum_uw_u\). Note that under this change the solution of \eqref{eq: UserOptProof} does not change. 
Furthermore, due to the condition that \(v_{u,\outsideOpt} \geq 0\) and the variable \(x_{u,\outsideOpt}\) does not enter in the constraint \eqref{eq: ConstraintsRelaxed_Ones}-\eqref{eq: ConstraintsRelaxedTwo}, it is ensured that the budget constraint \eqref{eq: BudgetRelaxed} hold with equality. 

Define \(\Delta_{|\tilde{\mathcal{E}}|}= \{\mathbf{p}\in \mathbb{R}^{|\tilde{\mathcal{E}}|}: \sum_{e\in \tilde{\mathcal{E}}}p_e = 1, p_e\geq 0\ \forall e\in \tilde{\mathcal{E}}\}.\)
Moreover, for every UAV \(u\), define \(Y_u = \{\bar{\mathbf{x}}_u\in \mathbb{R}^{|\tilde{\mathcal{E}}|+2}_{\geq 0}: \tilde{\mathbf{a}}_u^\top \mathbf{x}_u + x_{u,\dropOut} =  1, 
   \quad \bar{\mathbf{A}}_u\mathbf{x}_u =  0\},\) and \(Q_u = \{\bar{\mathbf{x}}_u\in \mathbb{R}^{|\tilde{\mathcal{E}}|+2}_{\geq 0}: x_{u,r}\leq \Omega, \quad \forall \ r \in \tilde{\mathcal{E}}\cup\{\outsideOpt, \dropOut\}\}\) for some \(\Omega> 1\). Define \(X = \Pi_{u\in U} Q_u\).  

Define \(B_u(\mathbf{p}) = \{\bar{\mathbf x}_u\in Y_i: \mathbf{p}^\top \mathbf{x}_u + p_{\outsideOpt}x_{u,\outsideOpt} = w_u\}\). Note that this set is non-empty, so we can always choose \(x_{u,\dropOut}\) to ensure that \(\mathbf{x}_u = \mathbf{0}\) and spend all the budget in the outside option \(\outsideOpt\).  
Define 
\begin{align}
\tilde{\mathbf{x}}_u(\mathbf{p}) &= \underset{\bar{\mathbf{x}}_u\in Q_u\cap B_u(\mathbf{p})}{\arg\max}f_u(\bar{\mathbf{x}}_u), \label{eq: x_Mapping}
\\
\tilde{\mathbf{p}}(\mathbf{x}) &= \underset{\mathbf{p}\in \Delta_{|\tilde{\mathcal{E}}|}}{\arg\max}~\mathbf{p}^\top\left(\sum_{u\in U}\mathbf{x}_u - \mathbf{1}\right)\label{eq: P_mapping}. 
\end{align}
Using the above definitions, define a correspondence \(h(\mathbf{x},\mathbf{p}) = ((\tilde{\mathbf{x}}_u(\mathbf{p}))_{u\in U}, \tilde{\mathbf{p}}(\mathbf{x}))\). We shall show that a fixed point of this mapping exists and is a fractional competitive equilibrium. 

\paragraph{Existence of a Fixed Point}

We show that \(h\) satisfies the condition of the Kakutani fixed point theorem (cf. Theorem \ref{thm: Kakutani}), which ensures the existence of a fixed point. First, note that the domain of \(h\), i.e. \(X\times \Delta_{|\tilde{\mathcal{E}}|}\), is non-empty, compact and convex. 

Next, we show that \(h\) is a non-empty, compact-valued, convex-valued, and upper semicontinuous correspondence. It is enough to show that \(\tilde{\mathbf{x}}_u(\mathbf{p})\) and \(\tilde{\mathbf{p}}(\mathbf{x})\) are non-empty, convex-valued and upper semicontinuous correspondences. 

From \eqref{eq: P_mapping}, we observe that \(\tilde{\mathbf{p}}(\mathbf{x})\) is non-empty and convex-valued and is an optimal solution to a linear program with a non-empty, convex, and compact feasible set. All conditions for Berge's maximum theorem (cf. Theorem \ref{thm: BergeMaximumTheorem}) are satisfied, and therefore \(\tilde{\mathbf{p}}(\mathbf{x})\) is also compact-valued and upper semicontinuous.

Next, we show that \(\tilde{\mathbf{x}}_u(\mathbf{p})\) is non-empty and convex-valued as it is the optimizer of a linear function on a non-empty, convex, and compact set. Next, we leverage Theorem \ref{thm: BergeMaximumTheorem} to show that this map is compact-valued and upper semicontinuous. First, we need to show that the correspondence \(g_u: \mathbf{p} \rightrightarrows Q_u\cap B_u(\mathbf{p})\) is a compact-valued and continuous correspondence. Compactness follows by construction, so the only thing remaining to show is continuity. To show continuity, it is enough to show that the mapping is upper semicontinuous and lower semicontinuous. 

To show \(g_u\) is upper semicontinuous, consider a sequence \((\bar{\mathbf{x}}_u^n,\mathbf{p}^n)\) such that \(\bar{\mathbf{x}}_u^n \in Q_u\cap B_u(\mathbf{p}^n)\), which has limit \((\bar{\mathbf{x}}_u, \mathbf{p})\). Then, it is sufficient to establish that \(\bar{\mathbf{x}}_u\in Q_u\cap B_u(\mathbf{p})\). Note that \(Q_u\) is compact, so if \(\bar{\mathbf{x}}_u^n\in Q_u\), for every \(n\in \mathbb{N}\), it follows that \(\bar{\mathbf{x}}_u\in Q_u\). Furthermore, since \(\bar{\mathbf{x}}_u^n\geq 0\), for every \(n\in \mathbb{N}\),  it follows that \(\bar{\mathbf{x}}_u\geq 0\). Additionally, for every \(n\in \mathbb{N}\), \(\tilde{\mathbf{a}}_u^\top{\mathbf{x}}_u^n + x_{u,\varnothing}^n = 1, 
   \quad \tilde{\mathbf{A}}_u{\mathbf{x}}_u^n =  \mathbf{0}\), it follows that \(\tilde{\mathbf{a}}_u^\top{\mathbf{x}}_u + x_{u,\varnothing} = 1, 
   \quad \bar{\mathbf{A}}_u{\mathbf{x}}_u =  \mathbf{0}\). Moreover, the continuity of product ensures that \(\mathbf{p}^n{}^\top \mathbf{x}_u^n + p_\outsideOpt x_{u,\outsideOpt}^n = w_u\), for every \(n\in \mathbb{N}\), implies \(\mathbf{p}{}^\top \mathbf{x}_u + p_\outsideOpt x_{u,\outsideOpt} = w_u\). This ensures that \(g_u\) is upper semicontinuous.

Next, we show that \(g_u\) is lower semicontinuous. To show this, it is sufficient to show that for any sequence \(\mathbf{p}^n\) with limit \(\mathbf{p}\) and any point \(\bar{\mathbf{x}}_u\in Q_u\times B_u(\mathbf{p})\)
 there is a sequence \(\bar{\mathbf{x}}_u^n\in Q_u\cap B_u(\mathbf{p}^n)\) such that \(\lim_{n\rightarrow \infty}\bar{\mathbf{x}}_u^n = \bar{\mathbf{x}}_u\).
Towards this goal, for every \(u\in U, e\in \tilde{\mathcal{E}}\), we define \(\bar{\mathbf{x}}^n\) such that 
 \begin{align*}
    {x}_{u,e}^n &= \min\left\{1,\frac{w_u}{\mathbf{p}^n{}^\top \mathbf{x}_u+p_{\outsideOpt}x_{u,\outsideOpt}}\right\}x_{u,e},\quad   x_{u,\dropOut}^n = 1- \tilde{\mathbf{a}}_u^\top{\mathbf{x}}_u^n, \quad  x_{u,\outsideOpt}^n  =  \frac{1}{p_\outsideOpt}\left(w_u - \mathbf{p}^n{}^\top \mathbf{x}_u^n\right). 
 \end{align*}
It is easy to check that \(\lim_{n\rightarrow \infty}\bar{\mathbf{x}}_u^n = \bar{\mathbf{x}}_u\). The only thing remaining to show is that 
\(\bar{\mathbf{x}}_u^n\in Q_u\cap B_u(\mathbf{p}^n)\). First, note that \(\tilde{\mathbf{a}}_u^\top\mathbf{x}_u^n + x_{u,\dropOut}^n =  1\) follows by construction. Next, we show that \(x_{u,\dropOut}^n\geq 0
\). This is because 
\begin{align*}
    &\tilde{\mathbf{a}}_u^\top\mathbf{x}_u^n = \min\left\{1,\frac{w_u}{\mathbf{p}^n{}^\top \mathbf{x}_u+p_{\outsideOpt}x_{u,\outsideOpt}}\right\}\tilde{\mathbf{a}}_u^\top\mathbf{x}_u \leq \tilde{\mathbf{a}}_u^\top\mathbf{x}_u = 1 -x_{u,\dropOut} \\ 
    &\implies 
   x_{u,\dropOut}^n = 1 - \tilde{\mathbf{a}}_u^\top\mathbf{x}_u^n  \geq x_{u,\dropOut} \geq 0, 
\end{align*}
where the inequality follows as \(\tilde{a}_{u,e}\geq 0, x_{u,e}^n\geq 0\). 
Similarly, one can show that \(\tilde{\mathbf{A}}_u\mathbf{x}_u^n = \mathbf{0}\). 
Next, we note that budget constraints are satisfied by the construction of \(x_{u,\outsideOpt}^{n}\). Finally, we show that \(x_{u,\outsideOpt}^n\geq 0\). Indeed, 
\begin{align*}
    \mathbf{p}^n{}^\top \mathbf{x}_u^n =\min\left\{1,\frac{w_u}{\mathbf{p}^n{}^\top \mathbf{x}_u+p_{\outsideOpt}x_{u,\outsideOpt}}\right\}p^n{}^\top \mathbf{x}_u  \leq w_u.  
\end{align*}
Thus, we conclude that \(g_u\) is a compact-valued continuous correspondence. 
Thus, from Theorem \ref{thm: Kakutani},  we conclude that there exists  \((\bar{\mathbf{x}}^\ast,\mathbf{p}^\ast)\) such that \(
\bar{\mathbf{x}}^\ast_u = \tilde{\mathbf{x}}_u(\mathbf{p}^\ast), \quad \mathbf{p}^\ast = \tilde{\mathbf{p}}(\bar{\mathbf{x}}^\ast), \quad \forall \ u\in U.
\)

\paragraph{Existence of a Fractional Competitive Equilibrium}

We show that any fixed point corresponds to a fractional competitive equilibrium. 
First, using \eqref{eq: x_Mapping}, we conclude that \(\bar{\mathbf{x}}^\ast_u\) is an optimal solution to \eqref{eq: UserOptProof}. Second, note that \(\mathbf{p}^\ast \in \mathbb{R}_{\geq 0}^{|\tilde{\mathcal{E}}|}\) by construction.  Next, we show that the capacity constraints are satisfied. We show this by contradiction. Suppose there exists an edge \(e'\in\tilde{\mathcal{E}}\) such that \(\sum_{u\in U}x_{u,e
'}^\ast > 1\). Then by \eqref{eq: P_mapping}, it must hold that 
\begin{align*}
    \sum_{e\in \tilde{\mathcal{E}}}p_e^\ast\left(\sum_{u\in U}x_{u,e}^\ast - 1\right)  \geq \sum_{e\in \tilde{\mathcal{E}}}p_e\left(\sum_{u\in U}x_{u,e}^\ast - 1\right), \quad \forall \ \mathbf{p}\in \Delta_{|\tilde{\mathcal{E}}|}. 
\end{align*}
We claim that \(\sum_{e\in \tilde{\mathcal{E}}}p_e^\ast (\sum_{u\in U}x_{u,e}^\ast - 1)= 0\). Indeed, 
\begin{align*}
\sum_{e\in \tilde{\mathcal{E}}}p_e^\ast (\sum_{u\in U}x_{u,e}^\ast - 1)= \sum_{u\in U}\sum_{e\in\tilde{\mathcal{E}}} p_e^\ast x_{e,u}^\ast  - \sum_{e\in \tilde{\mathcal{E}}} p_e^\ast = \sum_{u\in U}w_u - 1 = 0.   
\end{align*}
Thus, we conclude that 
\begin{align}\label{eq: p_eq}
   0 \geq  \sum_{e\in \tilde{\mathcal{E}}}p_e\left(\sum_{u\in U}x_{u,e}^\ast - 1\right), \quad \forall \ \mathbf{p}\in \Delta_{|\tilde{\mathcal{E}}|}. 
\end{align}
Since  \(\sum_{u\in U}x_{u,e
'}^\ast > 1\), we can select \(p_{e'} = 1\) and \(0\) otherwise, which would violate the above inequality, a contradiction. 

Next, we show that if \(p_e^\ast > 0\) then \(\sum_{u\in U}x_{u,e}^\ast = 1\). This follows immediately from the fact that capacity constraints are satisfied and the fact that \(\sum_{e\in \tilde{\mathcal{E}}} p_e^\ast (\sum_{u\in U}x_{u,e}^\ast - 1) = 0\). This completes the proof. 
\end{proof}

\subsection{Proof of Proposition \ref{prop: FixedPointOptimization}}\label{sec:Proposition35}
    Observe that for any fixed value of \(\omega\in \mathbb{R}^{|U|}\), the optimization problem \eqref{eq: PlannerOptMain} is a convex optimization problem. Define the Lagrangian as follows. 
\begin{align*}
\mathcal{L}_{\mathbf{P}} &= \sum_{u\in U}(w_u + \omega_u) \log\left(f_u(\bar{\mathbf{x}}_u)\right)-\sum_{u\in U} p_{\outsideOpt}x_{u,\outsideOpt} - \mathbf{p}^\top\left(\sum_{u\in U}\mathbf{x}_u-  \ell \right) \\ & - \sum_{u\in U}\lambda_u(\tilde{\mathbf{a}}_u^\top \mathbf{x}_u  + x_{u,\varnothing} - 1 
) - \sum_{u\in U}\kappa_u^\top \tilde{\mathbf{A}}_{u}\mathbf{x}_u +\sum_{u\in U}\mu_{u}^\top \bar{\mathbf{x}}_u, 
\end{align*}
where \(\mathbf{p}\in \mathbb{R}^{|\tilde{\mathcal{E}}|}_{\geq 0}\) is the Lagrange multiplier corresponding to constraint \eqref{eq: Cap_ConstraintM}, \(\lambda=(\lambda_u)_{u\in U} \in \mathbb{R}^{|U|}\) is the Lagrange multiplier corresponding to \eqref{eq: SelectingRouteM}, \(\kappa = (\kappa_u)_{u\in U} \in \mathbb{R}^{K|U|}\) is the Lagrange multiplier corresponding to \eqref{eq: PathConstraintsM}, and \(\mu = (\mu_{u})_{u \in U} \in \mathbb{R}^{|U||\tilde{\mathcal{E}}|}_{\geq 0}\) is the Lagrange multiplier corresponding to \eqref{eq: Pos_constraintM}.  
 
We observe that, for a given \(\omega\), any optimal solution \(\bar{\mathbf{x}}^\dagger\) of \eqref{eq: PlannerOptMain} with optimal dual multipliers \((\mathbf{p}^\dagger, \lambda^\dagger, \kappa^\dagger, \mu^\dagger)\) will satisfy the following first order conditions of optimality. 
{\begin{equation}\label{eq: FOS_BASOP}
\begin{aligned}
    0 \geq  \begin{cases}
        \frac{(w_u+\omega_u)}{f_u(\bar{\mathbf{x}}_u^\dagger)} v_{u,e} - p_e^\dagger - \tilde{a}_{u,e} \lambda_u^\dagger - (\tilde{\mathbf{A}}_u^\top\kappa_u^\dagger)_e& \text{if} \ e\in \tilde{\mathcal{E}}\\ 
        \frac{(w_u+\omega_u)}{f_u(\bar{\mathbf{x}}_u^\dagger)} v_{u,\outsideOpt} - p_{\outsideOpt}& \text{if} \ e = \outsideOpt \\ 
         \frac{(w_u+\omega_u)}{f_u(\bar{\mathbf{x}}_u^\dagger)} v_{u,\dropOut} - \lambda_u^\dagger   & \text{if} \ e = \dropOut.  
     \end{cases}
\end{aligned}
\end{equation}}
Furthermore, the complementary slackness conditions are given by 
\begin{equation}\label{eq: CompSlack_BASOP}
\begin{aligned}
    0 =  \begin{cases}
        \frac{(w_u+\omega_u)}{f_u(\bar{\mathbf{x}}_u^\dagger)} v_{u,e}x_{u,e}^\dagger - p_e^\dagger x_{u,e}^\dagger - \tilde{a}_{u,e}x_{u,e}^\dagger \lambda_u^\dagger - (\tilde{\mathbf{A}}_u^\top\kappa_u^\dagger)_ex_{u,e}^\dagger& \text{if} \ e\in \tilde{\mathcal{E}}\\ 
        \frac{(w_u+\omega_u)}{f_u(\bar{\mathbf{x}}_u^\dagger)} v_{u,\outsideOpt}x_{u,\outsideOpt}^\dagger - p_{\outsideOpt}^\dagger x_{u,\outsideOpt}^\dagger& \text{if} \ e = \outsideOpt \\ 
         \frac{(w_u+\omega_u)}{f_u(\bar{\mathbf{x}}_u^\dagger)} v_{u,\dropOut}x_{u,\dropOut}^\dagger - \lambda_ux_{u,\dropOut}^\dagger   & \text{if} \ e = \dropOut  
     \end{cases}, \quad p_e^\dagger(\sum_{u\in U}x_{u,e}^\dagger - \ell_e) = 0, \quad \forall \ e\in \tilde{\mathcal{E}}. 
\end{aligned}
\end{equation}

Similarly, the Lagrangian of the (relaxed) individual optimization problem \eqref{eq: UserOptProof} is given by 
\begin{align*}
    \mathcal{L}_{\mathbf{I}} &= f_u(\bar{\mathbf{x}}_u) - \tilde{\omega}_u\left(\mathbf{p}^\top \mathbf{x}_u + p_{\outsideOpt}x_{u,\outsideOpt} -w_u\right) - \tilde{\lambda}_u(\tilde{\mathbf{a}}_u^\top \mathbf{x}_u  + x_{u,\varnothing} - 1 
) - \sum_{u\in U}\tilde{\kappa}_u^\top \tilde{\mathbf{A}}_{u}\mathbf{x}_u + \tilde{\mu}_u^\top \bar{\mathbf{x}}_u, 
\end{align*} 
where \(\tilde{\omega}_u \in \mathbb{R}\) is the Lagrange multiplier corresponding to the budget constraint, \(\tilde{\lambda}_u \in \mathbb{R}\) is the Lagrange multiplier corresponding to \eqref{eq: ConstraintsRelaxed_Ones}, \(\tilde{\kappa}_u \in \mathbb{R}^{K}\) is the Lagrange multiplier corresponding to \eqref{eq: ConstraintsRelaxedTwo}, and \(\tilde{\mu}_u \in \mathbb{R}^{|\tilde{\mathcal{E}}|}_{\geq 0}\) is the Lagrange multiplier corresponding to the positivity constraint \eqref{eq: IntegralRelaxed}. 
% Note that at optimality
% \begin{align*}
%     &\frac{\partial \mathcal{L}_{\mathbf{I}}}{\partial \mathbf{x}_i} = 0, \\ &\tilde{\mu}_i^\top \mathbf{x}_i = 0 \quad \\ 
%     &\tilde\mu_i(j) \geq 0, \forall \ j \in [m]\cup\{\outsideOpt\}\cup\{\dropOut\},  \\ 
%     &x_i(j) \geq 0, \forall \ j \in [m]\cup\{\outsideOpt\}\cup\{\dropOut\}, \\
%     & \kappa_i \geq 0,
% \end{align*}

We observe that, for a given \(\mathbf{p}\), any optimal solution \(\bar{\mathbf{x}}^\ddagger\) of \eqref{eq: UserOptProof} with optimal dual multipliers \((\tilde\omega^\ddagger, \tilde\lambda^\ddagger, \tilde\kappa^\ddagger, \tilde\mu^\ddagger)\) satisfies the following first order conditions of optimality. 
\begin{equation}\label{eq: FOS_IOP}
\begin{aligned}
     0 \geq  
     \begin{cases}
      v_{u,e}-\tilde{\omega}_u^\ddagger p_e - \tilde{\lambda}_u^\ddagger \tilde{a}_{u,e} - (\tilde{\mathbf{A}}_u^\top \tilde{\kappa}_u^\ddagger)_e  & \text{if} \ e\in \tilde{\mathcal{E}}\\ 
        v_{u,\outsideOpt} - \tilde{\omega}_u^\ddagger p_{\outsideOpt}& \text{if} \ e = \outsideOpt \\ 
         v_{u,\dropOut} - \tilde{\lambda}_u^\ddagger & \text{if} \ e = \dropOut. 
     \end{cases}
\end{aligned}
\end{equation}
Furthermore, using the complementary slackness condition, we obtain 
\begin{equation}\label{eq: CompSlack_IOP}
\begin{aligned}
     0 = 
     \begin{cases}
       v_{u,e}x_{u,e}^\ddagger-\tilde{\omega}_u^\ddagger p_ex_{u,e}^\ddagger  - \tilde{\lambda}_u^\ddagger \tilde{a}_{u,e}x_{u,e}^\ddagger - (\tilde{\mathbf{A}}_u^\top \tilde{\kappa}_u^\ddagger)_ex_{u,e}^\ddagger & \text{if} \ e\in \tilde{\mathcal{E}}\\ 
        v_{u,\outsideOpt}x_{u,\outsideOpt}^\ddagger - \tilde{\omega}_u^\ddagger p_{\outsideOpt}x_{u,\outsideOpt}^\ddagger& \text{if} \ e = \outsideOpt \\ 
         v_{u,\dropOut}x_{u,\dropOut}^\ddagger - \tilde{\lambda}_u^\ddagger x_{u,\dropOut}^\ddagger & \text{if} \ e = \dropOut. 
     \end{cases}
\end{aligned}
\end{equation}

% Incorporating the complementary slackness condition in this equation results in 
% \begin{align*}
%     0 =  v_i(j)x_i(j) -\kappa_ip(j)x_i(j) - (\mathbf{A}_i^\top \tilde{\mathbf{r}}_i)_jx_i(j), \quad \forall \ j\in [m]\cup\{\outsideOpt, \dropOut\}.
% \end{align*}

In order to prove Proposition \ref{prop: FixedPointOptimization}, we show
that if there exists \(\omega^\ast\) such that \(\omega^\ast = \lambda^\dagger(\omega^\ast)\) then \((\bar{\mathbf{x}}^\dagger(\omega^\ast), \mathbf{p}^\dagger(\omega^\ast))\) is a fractional-competitive equilibrium. It is sufficient to verify the following: 
\begin{itemize}
    \item[(i)] By fixing the prices to  \(\mathbf{p}^\dagger(\omega^\ast),\) \(\bar{\mathbf{x}}_u^\dagger(\omega^\ast)\) is an optimal solution of \eqref{eq: UserOptProof}, for every \(u\in U\);
    \item[(ii)] the capacity constraints are satisfied at every resource; 
    \item[(iii)]  \(p_e^\dagger(\omega^\ast)\geq 0\)  for every \(e\in \tilde{\mathcal{E}}\); and
    \item[(iv)] if \(p_e^\dagger(\omega^\ast) > 0\) for some \(e\in \tilde{\mathcal{E}}\), then \(\sum_{u\in U} {x}_{u,e}^\dagger(\omega^\ast) = \ell_e\).
\end{itemize}
It is immediate to note that \((ii)-(iv)\) are satisfied due to dual and primal feasibility conditions of \eqref{eq: PlannerOptMain}. It only remains to show \((i)\). 

To show \((i)\), it is sufficient to show that the there exists \(\tilde{\omega}_u^\ddagger, \tilde{\lambda}_u^\ddagger, \tilde{\kappa}_u^\ddagger\) such that \((\bar{\mathbf{x}}_u^\dagger(\omega^\ast),\tilde{\omega}_u^\ddagger, \tilde{\lambda}_u^\ddagger, \tilde{\kappa}_u^\ddagger)\) satisfies the conditions \eqref{eq: FOS_IOP}-\eqref{eq: CompSlack_IOP}, and the budget constraint in \eqref{eq: BudgetRelaxed} holds. 

Setting the optimal Lagrange variable of \eqref{eq: Cap_ConstraintM} with \(\omega_u = \lambda_u\) then the optimal solution \(x^\ast\) of \eqref{eq: PlannerOptMain} is the solution of individual optimization problem for all players with price \(\mathbf{p}^\ast\). 

By primal optimality conditions in \eqref{eq: FOS_BASOP}, we obtain 
\begin{equation}
\begin{aligned}
    0 \geq  \begin{cases}
         v_{u,e} - \frac{f_u(\bar{\mathbf{x}}_u^\dagger)}{(w_u+\omega_u^\ast)}p_e^\dagger - \frac{f_u(\bar{\mathbf{x}}_u^\dagger)}{(w_u+\omega_u^\ast)}\tilde{a}_{u,e} \lambda_u^\dagger - \frac{f_u(\bar{\mathbf{x}}_u^\dagger)}{(w_u+\omega_u^\ast)}(\tilde{\mathbf{A}}_u^\top\kappa_u^\dagger)_e& \text{if} \ e\in \tilde{\mathcal{E}}\\ 
      v_{u,\outsideOpt} - \frac{f_u(\bar{\mathbf{x}}_u^\dagger)}{(w_u+\omega_u^\ast)} p_{\outsideOpt}& \text{if} \ e = \outsideOpt \\ 
          v_{u,\dropOut} - \frac{f_u(\bar{\mathbf{x}}_u^\dagger)}{(w_u+\omega_u^\ast)}\lambda_u^\dagger   & \text{if} \ e = \dropOut.  
     \end{cases}
\end{aligned}
\end{equation}
The preceding equation is equivalent to the primal optimality condition of individual optimization problem in \eqref{eq: FOS_IOP} if we select \(\tilde{\lambda}_u^\ddagger = \frac{f_u(\bar{\mathbf{x}}_u^\dagger)}{(w_u+\omega_u^\ast)} \lambda_u^\dagger\), \(\tilde{\omega}_u = \frac{f_u(\bar{\mathbf{x}}_u^\dagger)}{(w_u+\omega_u^\ast)}\) and \(\tilde{\kappa}_u^\ddagger =\frac{f_u(\bar{\mathbf{x}}_u^\dagger)}{(w_u+\omega_u^\ast)}\kappa_u^\dagger\). Similarly, \eqref{eq: CompSlack_IOP} is also satisfied with the same choice. 
Finally, we show that individual budget constraint \eqref{eq: BudgetRelaxed} holds. For this we use the complementary slackness condition in \eqref{eq: CompSlack_BASOP} by summing all three cases in \eqref{eq: CompSlack_BASOP}. For every \(u\in U,\) we obtain  
\begin{align*}
    0=\frac{(w_u+\omega_u^\ast)}{f_u(\bar{\mathbf{x}}_u^\dagger)} f_u(\bar{\mathbf{x}}_u^\dagger) - \mathbf{p}^\dagger{}^\top\mathbf{x}_u^\dagger - p_{\outsideOpt}\mathbf{x}_{u,\outsideOpt}^\dagger- \lambda_u^\dagger(\tilde{\mathbf{a}}_u^\top\mathbf{x}_u^\dagger + x_{u,\dropOut}^\dagger) - \kappa_u^\dagger{}^\top\tilde{\mathbf{A}}_u^\top \mathbf{x}_u^\dagger.
\end{align*}
Consequently, using \eqref{eq: SelectingRouteM}-\eqref{eq: PathConstraintsM} we obtain 
\begin{align*}
    0 &= (w_u+\omega_u^\ast) - \mathbf{p}^\dagger{}^\top\mathbf{x}_u^\dagger - p_{\outsideOpt}x_{u,\outsideOpt}^\dagger - \lambda_u^\dagger \\ &= w_u- \mathbf{p}^\dagger{}^\top\mathbf{x}_u^\dagger - p_{\outsideOpt}x_{u,\outsideOpt}^\dagger, 
\end{align*}
where in the last equation we used the fact that \(\omega^\ast= \lambda^\dagger\). 
This completes the proof.

\section{Derivation of Inner Loop Updates in Algorithm \ref{algo:1}} \label{sec:ADMM}
 
The updates in the inner loop in Algorithm \ref{algo:1} is derived based on ADMM updates for \eqref{eq: PlannerOptADMM}. 
We review the basic structure of the ADMM algorithm in Section \ref{ssec: ADMM_General} and then derive the inner loop updates in Section \ref{ssec: Derivation}.

\subsection{Review of ADMM algorithm}\label{ssec: ADMM_General}
The Alternative Direct Method of Multipliers (ADMM) is a distributed convex optimization algorithm that decomposes a problem into smaller subproblems, solves them in parallel, and coordinates to find a global solution via dual updates \cite{BOYLES2010519, Boyd_Vandenberghe_2004}. It is built on dual ascent and augmented lagrangian methods. 

Consider the following optimization problem with separable cost structure: 
\begin{equation}\label{eq: ADMM_General}
\begin{aligned}
    \max_{\mathbf{x}\in X, \mathbf{y}\in Y} & \ h(\mathbf{x}, \mathbf{y}) = h_1(\mathbf{x}) + h_2(\mathbf{y}) \\ 
    \text{s.t.} & \ A\mathbf{x} + B\mathbf{y} = \mathbf{c},
\end{aligned}
\end{equation}
where
\begin{itemize}
    \item[(i)]  \(X\subset \mathbb{R}^a, Y\subset \mathbb{R}^b\) are closed convex sets, 
    \item[(ii)] \(h_1: \mathbb{R}^a \rightarrow \mathbb{R}, h_2: \mathbb{R}^b \rightarrow \mathbb{R},\)
    \item[(iii)] \(A\in \mathbb{R}^{s \times a}, B\in \mathbb{R}^{s \times b}, \mathbf{c}\in \mathbb{R}^s.\) 
\end{itemize}

Let \(\mu\in \mathbb{R}^s\) be the dual multiplier of constraint in \eqref{eq: ADMM_General}. Consider the following augmented Lagrangian function for \eqref{eq: ADMM_General} for some parameter \(\beta> 0\)
\begin{align*}
    L_{\beta}(\mathbf{x}, \mathbf{y}) = h_1(\mathbf{x}) + h_2(\mathbf{y}) - \mu^\top (A\mathbf{x} + B\mathbf{y} - \mathbf{c}) - \frac{\beta}{2}\|A\mathbf{x} + B\mathbf{y} - \mathbf{c}\|^2.
\end{align*}

The ADMM algorithm is a discrete-time algorithm, indexed by \(k\), given as follows
\begin{equation}\label{eq: ADMM_Updates_General}
    \begin{aligned}
        \mathbf{x}^{(n+1)} &= \arg\max_{\mathbf{x}\in X} L_{\beta}(\mathbf{x}, \mathbf{y}^{(n)}) \\ 
        \mathbf{y}^{(n+1)} &= \arg\max_{\mathbf{y}\in Y} L_{\beta}(\mathbf{x}^{(n+1)}, \mathbf{y}) \\ 
        \mu^{(n+1)} &= \mu^{(n)} + \beta(A\mathbf{x}^{(n+1)} + B\mathbf{y}^{(n+1)}-\mathbf{c}).
    \end{aligned}
\end{equation}
The parameter \(\beta\) is also referred to as the step-size parameter for the ADMM algorithm.

\subsection{ADMM Updates for \eqref{eq: PlannerOptADMM}}\label{ssec: Derivation}
The inner loop in Algorithm \ref{algo:1} is nothing but the 
ADMM algorithm applied to \eqref{eq: PlannerOptADMM}. 

For any \(\beta>0\), we form the augmented Lagrangian $\mathcal{L}(\bar{\mathbf{x}}, \mathbf{y}, \mathbf{z}, \lambda, \mathbf{p}, \tilde{\mathbf{p}})$ for \eqref{eq: PlannerOptADMM} as follows 

\begin{align}
L_{\beta}(\bar{\mathbf{x}}, \mathbf{y}, \mathbf{z}, \lambda, \mathbf{p}, \tilde{\mathbf{p}}) = 
& \sum_{u \in U} (w_u + \omega_u) \log (f_u (\bar{\mathbf{x}}_u)) 
- \sum_{u\in U} p_{\outsideOpt}x_{u,\outsideOpt} 
- \sum_{u\in U}\tilde{\mathbf{p}}_u^\top(\mathbf{x}_u-\mathbf{y}_u)
- \mathbf{p}^\top \left( \sum_{u \in U} \mathbf{y}_u + \mathbf{z} - \ell \right) \nonumber \\ &  - \sum_{u \in U} \lambda_u (\tilde{\mathbf{a}}_u^\top \mathbf{x}_u + x_{u, \varnothing} - 1) - \frac{\beta}{2} \sum_{u \in U} \|\mathbf{x}_u - \mathbf{y}_u\|^2 - \frac{\beta}{2} \left\| \sum_{u \in U} \mathbf{y}_u + \mathbf{z} - \ell \right\|^2.
\end{align}

The ADMM algorithm (as per \eqref{eq: ADMM_Updates_General}) are given as follows:
\begin{subequations}\label{eq: ADMM_General_Updates}
 \begin{align}
    \bar{\mathbf{x}}^{(n+1)} &= \argmax_{\bar{\mathbf{x}}, \ \text{s.t.}~\eqref{eq: PathConstraintsM}-\eqref{eq: Pos_constraintM}~\text{hold}} L_{\beta}(\bar{\mathbf{x}}, \mathbf{y}^{(n)}, \mathbf{z}^{(n)}, \lambda^{(n)}, \mathbf{p}^{(n)}, \tilde{\mathbf{p}}^{(n)})\notag \\
 &= \argmax_{\bar{\mathbf{x}}, \ \text{s.t.}~\eqref{eq: PathConstraintsM}-\eqref{eq: Pos_constraintM}~\text{hold}} \quad \sum_{u \in U} (w_u + \omega_u) \log (f_u (\bar{\mathbf{x}}_u)) 
- \sum_{u\in U} p_{\outsideOpt}x_{u,\outsideOpt} 
- \sum_{u\in U}\tilde{\mathbf{p}}_u^{(n)}{}^\top(\mathbf{x}_u-\mathbf{y}_u^{(n)}) \notag   \\
&\hspace{3cm}
- \sum_{u \in U} \lambda_u^{(n)} (\tilde{\mathbf{a}}_u^\top \mathbf{x}_u + x_{u, \varnothing} - 1) - \frac{\beta}{2} \sum_{u \in U} \|\mathbf{x}_u - \mathbf{y}_u^{(n)}\|^2 \label{subeq: ADMM_General_x_update}
\\
({\mathbf{y}}^{(n+1)},\mathbf{z}^{(n+1)})&=\argmax_{{\mathbf{y}}\in \mathbb{R}^{U \times |\tilde{\mathcal{E}}|}, \ {\mathbf{z}} \in \mathbb{R}^{|\tilde{\mathcal{E}}|}_+}  L_{\beta}(\bar{\mathbf{x}}^{(n+1)}, \mathbf{y}, \mathbf{z}, \lambda^{(n)}, \mathbf{p}^{(n)}, \tilde{\mathbf{p}}^{(n)}) 
\notag \\
&= \argmax_{{\mathbf{y}}\in \mathbb{R}^{U \times |\tilde{\mathcal{E}}|}, \ {\mathbf{z}} \in \mathbb{R}^{|\tilde{\mathcal{E}}|}_+} -\sum_{u\in U}\tilde{\mathbf{p}}_u^{(n)}{}^\top(\mathbf{x}_u^{(n+1)}-\mathbf{y}_u)
- \mathbf{p}^{(n)}{}^\top \left( \sum_{u \in U} \mathbf{y}_u + \mathbf{z} - \ell \right) \notag \\ &\hspace{3cm} - \frac{\beta}{2} \sum_{u \in U} \|\mathbf{x}_u^{(n+1)} - \mathbf{y}_u\|^2 - \frac{\beta}{2} \left\| \sum_{u \in U} \mathbf{y}_u + \mathbf{z} - \ell \right\|^2 \label{subeq: ADMM_General_y_update}
\\ 
    \lambda^{(n+1)}_u &= \lambda_u + \beta(\tilde{\mathbf{a}}_u^\top\mathbf{x}_u^{(n+1)} + x_{u,\varnothing}^{(n+1)} - 1), \quad \forall \ u\in U \label{subeq: ADMM_General_lam_update}\\ 
    \mathbf{p}^{(n+1)} &= \mathbf{p}^{(n)}_u + \beta(\sum_{u\in U}\mathbf{y}_u^{(n+1)} + \mathbf{z}^{(n+1)}-\ell) \label{subeq: ADMM_General_p_update}\\ 
    \tilde{\mathbf{p}}^{(n+1)}_u &= \tilde{\mathbf{p}}^{(n)}_u + \beta(\mathbf{x}_u^{(n+1)} - \mathbf{y}_u^{(n+1)}), \quad \forall \ u\in U. \label{subeq: ADMM_General_ptil_update}
    \end{align}
    \end{subequations}

First, we claim the if \(\mathbf{p}^{(0)} = \tilde{\mathbf{p}}_u^{(0)}\) for every \(u\in U\), then \(\mathbf{p}^{(n)} = \tilde{p}_u^{(n)}\) for every \(u\in U\) and \(n\in \mathbb{N}\). We prove this by induction. Suppose for some \(n,\) \(\mathbf{p}^{(n)} = \tilde{p}_u^{(n)}\) for every \(u\in U\) then we show that \(\mathbf{p}^{(n+1)} = \tilde{p}_u^{(n+1)}\) for every \(u\in U.\)
To see this, note from the first order conditions of optimality for \eqref{subeq: ADMM_General_y_update} with respect to \(\mathbf{y}\), we obtain
\begin{align}\label{eq: EqualityP_PTil}
    \tilde{\mathbf{p}}_u^{(n)} - \mathbf{p}_u^{(n)} + \beta (\mathbf{x}_u^{(n+1)} - \mathbf{y}^{(n+1)}) - \beta (\sum_{u\in U}\mathbf{y}_u^{(n+1)} + \mathbf{z}^{(n+1)} - \ell) = 0. 
\end{align}
Using preceding equation, we obtain 
\begin{align*}
    \tilde{\mathbf{p}}_u^{(n+1)} \underset{\eqref{subeq: ADMM_General_ptil_update}}{=} \tilde{\mathbf{p}}_u^{(n)} + \beta (\mathbf{x}_u^{(n+1)} - \mathbf{y}^{(n+1)}) \underset{\eqref{eq: EqualityP_PTil}}{=} \mathbf{p}_u^{(n)} + \beta (\sum_{u\in U}\mathbf{y}_u^{(n+1)} + \mathbf{z}^{(n+1)} - \ell) \underset{\eqref{subeq: ADMM_General_p_update}}{=} {\mathbf{p}}_u^{(n+1)}. 
\end{align*}
This concludes our claim. Therefore, we get rid of notation \(\tilde{\mathbf{p}}\) and work only with \(\mathbf{p}.\)

Finally, note that \eqref{subeq: ADMM_General_x_update} is separable in \(\bar{\mathbf{x}}_u\) for every \(u\in U\). Against the preceding backdrop, \eqref{eq: ADMM_General_Updates} can be re-written as 
\begin{subequations}\label{eq: ADMM_General_Final}
 \begin{align}
    \bar{\mathbf{x}}^{(n+1)}_u &=  \argmax_{\bar{\mathbf{x}}_u, \ \text{s.t.}~\eqref{eq: PathConstraintsM}-\eqref{eq: Pos_constraintM}~\text{hold}} \quad (w_u + \omega_u) \log (f_u (\bar{\mathbf{x}}_u)) 
- p_{\outsideOpt}x_{u,\outsideOpt} 
-{\mathbf{p}}_u^{(n)}{}^\top\mathbf{x}_u \notag   \\
&\hspace{3cm}
-  \lambda_u^{(n)} (\tilde{\mathbf{a}}_u^\top \mathbf{x}_u + x_{u, \varnothing} - 1) - \frac{\beta}{2} \|\mathbf{x}_u - \mathbf{y}_u^{(n)}\|^2 \label{subeq: ADMM_General_x_updated}
\\
({\mathbf{y}}^{(n+1)},\mathbf{z}^{(n+1)})
&= \argmax_{{\mathbf{y}}\in \mathbb{R}^{U \times |\tilde{\mathcal{E}}|}, \ {\mathbf{z}} \in \mathbb{R}^{|\tilde{\mathcal{E}}|}_+} - \mathbf{p}^{(n)}{}^\top \mathbf{z} - \frac{\beta}{2} \sum_{u \in U} \|\mathbf{x}_u^{(n+1)} - \mathbf{y}_u\|^2 - \frac{\beta}{2} \left\| \sum_{u \in U} \mathbf{y}_u + \mathbf{z} - \ell \right\|^2 \label{subeq: ADMM_General_y_updated}
\\ 
    \lambda^{(n+1)}_u &= \lambda_u + \beta(\tilde{\mathbf{a}}_u^\top\mathbf{x}_u^{(n+1)} + x_{u,\varnothing}^{(n+1)} - 1), \quad \forall \ u\in U \label{subeq: ADMM_General_lam_updated}\\ 
    \mathbf{p}^{(n+1)} &= \mathbf{p}^{(n)}_u + \beta(\sum_{u\in U}\mathbf{y}_u^{(n+1)} + \mathbf{z}^{(n+1)}-\ell) \label{subeq: ADMM_General_p_updated}
    \end{align}
    \end{subequations}

Updates \eqref{eq: ADMM_General_Final} correspond to the inner loop updates in Algorithm \ref{algo:1}, where \eqref{subeq: ADMM_General_x_updated} is implemented locally by different AAM vehicles and \eqref{subeq: ADMM_General_y_updated}-\eqref{subeq: ADMM_General_p_updated} are implemented by service provider.

\section{Vertiport Reservation Mechanism in Northern California}\label{sec:SanFranciscoCase}

In this section, we study a scenario of vertiport reservation for (hypothesized) air taxi services in Northern California. We simulate a scenario where different air taxis request access to air routes to transport people at an urban and regional level. The vertiports in this simulation are located in various cities in Northern California as shown in the map in Fig. \ref{fig:vertiport-map}. For simplicity, we are modeling linear trajectories and assuming a maximum travel range of 100 miles.

\begin{figure}
    \centering
    \includegraphics[scale=0.4]{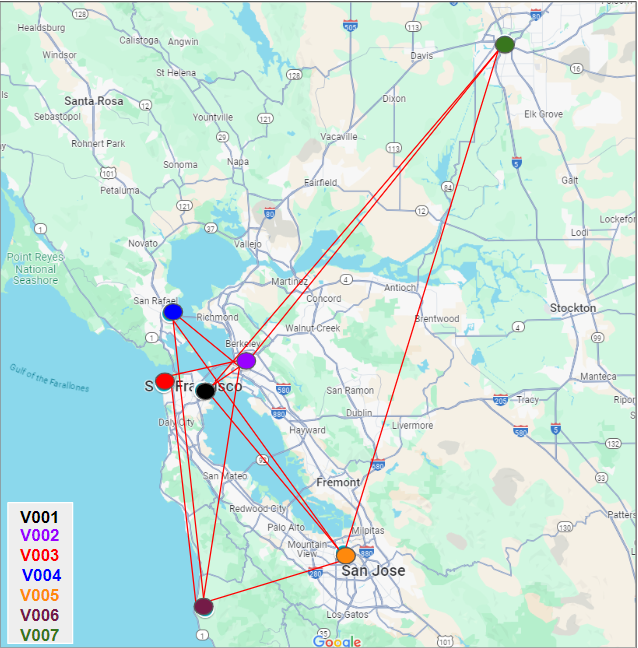}
    \caption{Northern California Vertiport Map. This map, adapted from a Google Maps image, highlights seven distinct vertiports using unique color codes and displays the example routes as red lines.}   
    \label{fig:vertiport-map}
\end{figure}

% \subsection{Results}\label{ssec: Results}
In this example, 20 air taxis request a departure, air route, and landing clearances among seven vertiport destinations during a 10-minute auction window. The requests by the air taxis, the final allocation of routes, and payments to the SP are presented in Table \ref{tab: ServiceProviderAccess} along with the maximum capacity in every segment of the desired routes, the utility of the air taxis for a given path, and their initial air credits. We set \(\beta= 50, p_\outsideOpt = 10, v_{u,\outsideOpt} = 1, v_{u,\dropOut}= 1, N = 2, \textsf{tol} = 1\times 10^{-4}\). The \textit{Maximum Capacity} column in Table \ref{tab: ServiceProviderAccess} specifies the maximum number of vehicles that can traverse a travel segment at any given time. These values help the reader identify contested travel segments and understand why certain agents must compete for access. In these tables, air taxis sharing the same color represent those that simultaneously requested the same trajectory slot, leading to a constraint violation. As a result, only a subset of these air taxis were granted their preferred route, while the others were denied access. We also present the rank number of these agents representing the order in which each agent computed their integral allocation, as outlined in Algorithm 2.

Below, we highlight the main observations from our numerical study.  
 \begin{itemize}
     \item[(i)] At time step 16, AC003, AC004, and AC015 request departure from V002, which has a departure capacity constraint of one. Consequently, only AC004 is allocated to depart at this time step due to its higher air credits, while AC003 and AC015 are delayed. Naturally, these air taxis would prefer to depart at the next time step; however, they now compete for departure from V002 with AC013 and AC018 at time steps 17 and 18, respectively. Notably, Algorithm \ref{algo:2} prioritizes AC013 and AC018, resulting in further delays for AC003 and AC015. At time step 19, AC003 and AC015 compete again, with AC003 receiving priority due to its higher air credits in Algorithm \ref{algo:2}.

     \item[(ii)]  AC002 and AC011 request landing slots at V004 at the same time, which results in delay for AC011. This is because AC011 has both a lower budget and lower utility in comparison to AC002.
    \item[(iii)] AC009 and AC010 request departure from V001 at the same time, which results in delay for AC009. This is because AC010 has a significantly higher budget than AC009.
    \item[(iv)] Air taxis that are delayed are charged less than those who are allocated their preferred routes. 
    
 \end{itemize}

\begin{table}[h!]
\centering
\footnotesize
\caption{Results of the allocation of air taxis to the desired routes, payments to the SP, utility, initial air credits, and maximum capacity in the en-route travel segment}
\renewcommand{\arraystretch}{1.2}
\setlength{\tabcolsep}{2pt} % Reduce column padding
\resizebox{\textwidth}{!}{ % Auto-scale to fit within margins
\begin{tabular}{|c|c|c|c|c|c|c|c|c|c|}
\hline
\textbf{Aircraft} & \thead{\textbf{Req. Route} \\ \textbf{(Orig., Dest.)}}  & \thead{\textbf{Req. Time} \\ \textbf{(Arr, Dep)}}& \thead{\textbf{Max. Capacity} \\ \textbf{(Dep, Route, Arr)}} & \thead{\textbf{Allocated} \\ \textbf{Time} \\ \textbf{(Arr, Dep)}}& \textbf{Status} & \thead{\textbf{Price} \\ \textbf{(\$)}}  & \thead{\textbf{Initial} \\ \textbf{Air} \\ \textbf{Credits}} & \textbf{Utility} & \textbf{Rank} \\ \hline

AC001 & (V007, V002) & (16, 54)  & (2,4,1) & (16, 54) & on-time & 0.0  & 125 & 118 & 6 \\ \hline

{\color{magenta}AC002}& {\color{magenta}(V005, V004)} &{\color{magenta}(19, 47)} & {\color{magenta}(4,5,1)} & {\color{magenta}(19, 47)} & {\color{magenta}on-time} & {\color{magenta}5.73} & {\color{magenta}90} & {\color{magenta}171} & {\color{magenta}7}
\\ \hline

{\color{magenta}\textbf{AC011}} & {\color{magenta}\textbf{(V006, V004)}}  &{\color{magenta}\textbf{(19, 47)}}& \textbf{{\color{magenta}(1,2,1)} }& {\color{magenta}\textbf{(20, 48)}} & {\color{magenta}\textbf{delayed}} & {\color{magenta}\textbf{1.53}} & \textbf{{\color{magenta}78} }& \textbf{{\color{magenta}135}} & \textbf{{\color{magenta}19}}
\\ \hline

\textbf{{\color{blue}AC003}} &\textbf{{\color{blue}(V002, V001)}}& \textbf{{\color{blue}(16, 21)}} & \textbf{{\color{blue}(1,1,2)}} & \textbf{{\color{blue}(19, 24)}} & {\color{blue}\textbf{delayed}} & {\color{blue}\textbf{3.71}} & \textbf{{\color{blue}135}}& \textbf{{\color{blue}172}}  & \textbf{{\color{blue}18}} \\ \hline

{\color{blue}AC004} & {\color{blue}(V002, V001)} & {\color{blue}(16, 21)} &  {\color{blue}(1,1,2)} & {\color{blue}(16, 21)} & {\color{blue}on-time}  & {\color{blue}20.36 } &{\color{blue}154} & {\color{blue}133} & {\color{blue}13}
\\ \hline
\textbf{{\color{blue}AC015}} & \textbf{{\color{blue}(V002, V001)} }& \textbf{{\color{blue}(16, 21)}} & \textbf{{\color{blue}(1,1,2)}} & \textbf{{\color{blue}(20, 25)}} & \textbf{{\color{blue}\textbf{delayed}}}  & {\color{blue}\textbf{0.86}} & \textbf{{\color{blue}65}} & \textbf{{\color{blue}194}} & \textbf{{\color{blue}20}}
\\ \hline

{\color{blue}AC013}  & {\color{blue}(V002, V006)} & {\color{blue}(17, 41)} & {\color{blue}(1,4,3)}  &  {\color{blue}(17, 41)} & {\color{blue}on-time } & {\color{blue}11.11}  & {\color{blue}55} & {\color{blue}147} & {\color{blue}16}
\\ \hline

{\color{blue}AC018}  & {\color{blue}(V002, V007)} & {\color{blue}(18, 56)} & {\color{blue}(1,2,3)} & {\color{blue}(18, 56)} &  {\color{blue}on-time}  & {\color{blue}8.46} & {\color{blue}103} & {\color{blue}165}  & {\color{blue}8}
\\ \hline

AC005  & (V003, V002) & (11, 19) & (1,5,1) & (11, 19) & on-time & 0.0 & 83 & 177  & 4 \\ \hline

AC006  & (V005, V007) & (18, 68) & (4,3,3) & (18, 68) & on-time & 0.0 & 199 & 148  & 15 \\ \hline

AC007  & (V003, V002) & (15, 23) & (1,5,1) & (15, 23) & on-time & 0.0 & 100 & 183 & 5\\ \hline

AC008  & (V007, V001) & (12, 54) & (2,3,2) & (12, 54) &  on-time & 0.0 & 104 & 155 & 10  \\ \hline

{\color{red}\textbf{AC009}}  & {\color{red}\textbf{(V001, V005)}} & {\color{red}\textbf{(13, 34)}} & \textbf{{\color{red}(5,1,2)} } & {\color{red}\textbf{(14, 35)}} & {\color{red}\textbf{delayed}} & {\color{red}\textbf{2.10}} & \textbf{{\color{red}67}} & \textbf{{\color{red}189}} & \textbf{{\color{red}17}} \\ \hline
{\color{red}AC010} & {\color{red}(V001, V005)} & {\color{red}(13, 34)} & {\color{red}(5,1,2)} & {\color{red}(13, 34)} &{\color{red}on-time}& {\color{red}5.75}  & {\color{red}114} & {\color{red}163} & {\color{red}3}  \\ \hline
AC012 & (V005, V001) & (16, 37) & (4,3,2) & (16, 37) & on-time  & 0.0  & 90 & 124  & 12\\ \hline
AC014  & (V001, V002) & (11, 24) & (5,2,1) &  (11, 24) & on-time& 0.0 & 64 & 174   & 9  \\ \hline
AC016 & (V007, V005) & (17, 67) & (2,5,2) & (17, 67) & on-time  & 0.0 & 109 & 189  & 14 \\ \hline
AC017  & (V004, V006) & (16, 44) & (5,3,3) & (16, 44)  & on-time & 0.0 & 155 & 149  & 11 \\ \hline
AC019  & (V004, V002) & (16, 35) & (5,5,2) & (16, 35)  & on-time & 0.0 & 104 & 147& 1\\ \hline
AC020  & (V003, V006) & (16, 38) & (1,2,3) & (16, 38)  & on-time & 0.11 & 96 & 146 & 2\\ \hline
\end{tabular}
}
\label{tab: ServiceProviderAccess}
\end{table}

\section{Table of Notations} \label{sec:appendixA}
\begin{center}
\begin{tabular}{||c  p{10cm} ||} % Adjust the width (e.g., 5cm) as needed
\hline
Notation & Description  \\ [0.5ex] 
 \hline
\(\mathcal{G}\) & Graph representing the airspace \\
$\tilde{\mc{G}}$ & Graph representing the time-extended airspace \\
$\mc{R}$ & Set of regions/sectors in the urban airspace \\
$ \tilde{\mc{E}}$ & Set of edges indicating feasible movement between contiguous regions \\
$C^{\text{arr}}(r,t), C^{\text{dep}}(r,t), C^{\text{stay}}(r,t)$ & Maximum number of vehicles that can arrive, depart, or stay in the region $r \in R$ at time $t$ \\
% $T$ & Number of time intervals in the horizon \\
$W$ & Auction window time interval \\
$U(t)$ & Set of AAM vehicles arriving in the system at time $t$ \\
$M_u$ & Menu of time-trajectories (air corridors) for AAM vehicle \(u\) \\
$R_u$ & Set of routes in the menu of AAM vehicle \(u\) \\
$\nu(r,t)$ & Node in region $r$ and time $t$ \\
$v_{u,s}$ & Valuation (utility) of vehicle $u$ for path $s$ \\
$p_o, p_e$ & Price of the outside edge option, price of an edge $\in  \tilde{\mc{E}}$\\ 
$x_{u,e}$ & Allocation of an AAM vehicle $u$ to edge $e$ \\
$x_{u,o}$ & Allocation of an AAM vehicle $u$ to outside option \\
\(\mathbf{x}_u\) & The vector \((x_{u,e})_{e\in\tilde{\mathcal{E}}}\) \\ \(\bar{\mathbf{x}}_u\) & The vector \([\mathbf{x}_u^\top, x_{u,\outsideOpt}, x_{u,\varnothing}]^\top\)
\\
$w_u$ & Budget of AAM vehicle $u$ (air credits) \\
$e^\ast(s)$ & Departing edge from the origin region on the route \(s\) 
\\
$\ell_e$ & Supply of edge $ e \in  \tilde{\mc{E}}$  \\
$\bar{\mathbf{x}}_{u}$ & Optimal solution for the IOP with optimal price $\mathbf{p^*}$ \\ 
% $A_u$ & Matrix with physical flow balance constraints \\
$f_u(\bar{\mathbf{x}}_u)$ & Total utility of allocation $\bar{x}_{u}$ \\
$\tilde{\mc{E}}^{(1)}, \tilde{\mc{E}}^{(2)}, \tilde{\mc{E}}^{(3)}, \tilde{\mc{E}}^{(4)} $ & Arrival (landing), departing (take-off), stay (parking), and transit (air-corridor) edges
\\
% \textsf{MCE} & Market-clearing Error \\
% \textsf{ICE} & Individual Constraint Error \\
% \textsf{MCE} & Baseline Demand Error \\
% \textsf{IOP} & Individual Optimization Problem \\
\hline
\end{tabular}
\end{center}

\addtolength{\textheight}{-12cm}   

%%%%%%%%%%%%%%%%%%%%%%%%%%%%%%%%%%%%%%%%%%%%%%%%%%%%%%%%%%%%%%%%%%%%%%%%%%%%%%%%

\end{document}